\documentclass[11pt]{amsart}
%\NeedsTeXFormat{LaTeX2e}

%\documentclass{mscs}
%\usepackage{geometry}                % See geometry.pdf to learn the layout options. There are lots.
%\geometry{letterpaper}                   % ... or a4paper or a5paper or ... 
%\geometry{landscape}                % Activate for for rotated page geometry
%\usepackage[parfill]{parskip}    % Activate to begin paragraphs with an empty line rather than an indent

%\usepackage{graphicx}
\usepackage{amssymb}
%\usepackage{epstopdf}
%\DeclareGraphicsRule{.tif}{png}{.png}{`convert #1 `dirname #1`/`basename #1 .tif`.png}

\newtheorem{theorem}{Theorem} 
\newtheorem{proposition}{Proposition}

\newtheorem{corollary}{Corollary}

\newcommand{\latset}{{\tt LoI}}
\newcommand{\join}{\sqcup}
\newcommand{\meet}{\sqcap}
\newcommand{\ssm}[1] {${ {#1} }$}

\newcommand{\por}{\sqsubseteq}
\newcommand{\sem}[1]{\mbox{$[\![ #1 ]\!]$}}

\title[Algebraic Foundations for Quantitative Information Flow]{Algebraic Foundations for Information Theoretical, Probabilistic and Guessability measures of Information Flow }
\author[Pasquale Malacaria]{Pasquale Malacaria\\
School of Electronic Engineering and Computer Science\\
Queen Mary University of London\\
London, Mile End Road, E1 4NS, UK }

%\date{}                                           % Activate to display a given date or no date 

\begin{document}
\maketitle

\begin{abstract}
Several mathematical ideas have been investigated for Quantitative Information Flow.
Information theory, probability, guessability are the main ideas in most proposals. They aim to quantify {\em how much information} is leaked, {\em how likely is to guess} the secret and {\em how long does it take} to guess the secret respectively.
In this paper, we show how  the Lattice of Information provides a valuable foundation for all these approaches; not only it provides an elegant algebraic framework for the ideas, but also to investigate their relationship. In particular we will use this lattice to prove some results establishing order relation correspondences between the different quantitative approaches. The implications of these results w.r.t. recent work in the community is also investigated.
While this work concentrates on the foundational importance of  the Lattice of Information its  practical relevance has been recently proven, notably with the quantitative analysis of Linux kernel vulnerabilities. Overall we believe these works set the case for establishing the Lattice of Information as one of the main reference structure for Quantitative Information Flow.

\end{abstract}

\section{Introduction}

Quantitative security analysis should be able to address confidentiality\footnote{In this work we restrict ourselves to security as confidentiality} comparison questions like: ``given programs $P$ and $P'$ which one is more of a threat?"
This comparison problems is related to the other fundamental question that a quantitative security analysis should be able to address: ``how much of a threat is program $P$?"

%In general we could expect that by answering numerically the second question for two programs $P$ and $P'$ we should be able to answer the comparison problem. On the other hand in many cases the security of a program $P$ could be assessed by comparing it to some threshold program $P'$.

Quantitative analyses are based on some measure, usually a real number.
 This number may answer the comparison problems by reducing it to a numerical comparison and the second question by considering the magnitude of the number in relation to the size of the secret.  In many of these measures the number 0 has been shown to characterise secure programs.

In recent years a number of ideas have emerged as reasonable measures for Quantitative Information Flow (abbreviated as QIF): Information Theory, probabilistic  measures and guessability  \cite{0dav, belief, koepfbasin, Fossacs09}.
The information theoretical concepts of entropy, conditional entropy and mutual information have been used to answer questions like ``how much information can an attacker gain from observing the system?" whereas probabilities can be used to answer questions like ``how likely is that the attacker may guess the secret in $n$ tries after observing the system?" and guessability measures the question "what is the number of guesses needed to guess the secret after the observations?"

There seems to be an intuitive  connection between these questions, but the connection is not trivial; in fact  some deep differences have been noticed in these approaches \cite{Fossacs09}. In the context of QIF
the differences seems mainly to relate to the variety of attackers models and of what the scope of modelling should be.

In this work we aim to relate the confidentiality comparison questions in probabilistic, guessability and information theoretical approaches. We will do this by studying their relation to an algebraic structure: 
the Lattice of Information (abbreviated as \latset).

The Lattice of Information is the lattice of all equivalence relations on a set; by identifying  observations over a system as the equivalence relation equating all (secret) states that cannot be distinguished by those observations we see \latset\ as the mathematical model for all observations  generated by all possible deterministic systems over a set of (secret) states.

This allows for an elegant analysis decomposition of QIF into two steps, the first being an {\em algebraic interpretation}, the second being a {\em numerical evaluation}:
\begin{enumerate}
\item interpret the attacker view of the system as an {\em equivalence relation} identifying the states indistinguishable by the attacker through the observations,
\item {\em measure} the above equivalence relation. This measure should provide an indication of the leakage of confidential information (or vulnerability) of the system.
\end{enumerate}
 While these equivalence relations have been successfully  used in recent years \cite{0dav,JCS09, koepfbasin}, we aim here to prove some fundamental results about their algebraic structure.

 \begin{quote}
Given two systems $S,S'$ and the associated equivalence relations $\simeq_S,\simeq_{S'}$ we will show the following equivalences:
\begin{enumerate}
\item $\simeq_{S'}$ refines $\simeq_{S}$
\item  the leakage of $S$ is always less than the leakage of $S'$ (leakage measured by Shannon entropy).
\item  the expected probability of guessing the secret in $n$ tries according to $\simeq_{S}$ is always less than the expected probability of guessing the secret in $n$ tries according to $\simeq_{S'}$
\item  the expected numbers of guesses needed to guess the secret according to $\simeq_{S'}$ is always less than the expected numbers of guesses needed to guess the secret according to $\simeq_{S}$

\end{enumerate}
\end{quote}

In other terms given  two programs $P,P'$  to determine whether $P'$ refines $P$ (as observational equivalence relations) is the same as to determine whether is always the case that it is more likely to guess the secret using $P'$ instead of $P$. This is also the same as to determine whether the entropy of $P$ is always less than the entropy of $P'$. Moreover these results are shown to be consistent with different definitions of Quantitative Information Flow based on the adversary gain through observations i.e. the difference in threat before and after observations are made \cite{safety}.

These results hence provide a clear connection between the algebraic, probabilistic and information theoretical view of leakage.

The work  also contributes to the foundations of Quantitative Information Flow, in particular to the important work by G. Smith \cite{Fossacs09}, where the difference between the "one guess" model and the information theoretical one were insightfully debated.
What Smith noticed was that there exist programs such that, assuming a uniform distribution of the secret, their
 information theoretical measure is the same but whose vulnerability to a one guess attack is very different. In the argument it is important to consider a specific (in this case uniform) distribution.
 It is arguable however that code analysis should be affected by an element independent of the code, in this case the distribution.
 What our result shows is that if we argue about the relative vulnerability to $n$ tries attack of two programs and the argument is not dependent on a specific distribution then their relative vulnerability is determined by their \latset\ order or equivalently by their entropy order.
  
The algebraic aspect of QIF, i.e. the \latset\ interpretation of programs is far from being a pure academic exercise; in fact it has informed works integrating QIF with verification techniques  \cite{koepf09, acsac} where model checkers and sat-solvers are used to build the equivalence $\simeq_S$ associated to a program.
More recently these ideas  have been exploited  to build the first quantitative analysis for 
 real code leakage, in particular to quantify leakage of Linux kernel functions \cite{acsac}.
These works make use of a basic relation between \latset\ and Information Theory, i.e. the fact that  $\log(|\simeq_S|)$ is the channel capacity of the system $S$, i.e. the  maximum amount that $S$ can leak.

\section{Basics}\label{BASIC}

\subsection{Observations and the lattice of information}
We can see {\em observations} over a system as some {\em partial } information on systems' states, in that an observation reveals some information about the states of the system.  
Some systems may allow for observations revealing no information (all states are possible according to that system's observations) while other systems may allow for observations revealing complete information on the states of the system.

We will make an important determinacy  assumption about observations, i.e. that a system's
observations  form a {\em partition} on the set of all possible states: a block in this partition is the set of states that are indistinguishable by that observation. 
This assumption is satisfied for example in the setting of sequential languages when we take as observations the program outputs
because the inverse image of a function form a partition on the function domain.

In this work we will use the terms partition or equivalence relation interchangeably. An equivalence relation can always be seen as the partition whose blocks are the equivalence classes and a partition  can always be seen as the equivalence relation defined by two objects are related iff they are in the same block.

\subsection{Partitions and equivalence relations as lattice points}

Given  a finite set  $\Sigma$
the set of all possible equivalence relations over  $\Sigma$ is a {\em complete lattice}: the Lattice of Information (abbreviated as \latset) \cite{loi}. Order on equivalence relations is the refinement order.

Formally let us define the set \latset\ as the set of all possible equivalence relations on a set $\Sigma$. 
Given  $\approx,\sim$  $\in $ \latset\ and $\sigma_{1},\sigma_{2} \in \Sigma$
the ordering of \latset\ is defined as
\begin{equation}\label{ORDER}
\approx\ \sqsubseteq\ \sim\ \leftrightarrow\ \forall \sigma_1, \sigma_2\ (\sigma_1 \sim \sigma_2 \Rightarrow \sigma_1 \approx \sigma_2)
\end{equation}
This is a {\em refinement} order: classes in $\sim$ refine (split) classes in $\approx$.
Thus, higher elements in the lattice can distinguish more while lower elements in the lattice can distinguish less states.
It easily follows from  (\ref{ORDER}) that \latset\ is a complete lattice. 

Alternatively  the lattice operations join $\join$ and meet $\meet$  are defined as the intersection of relations and the transitive closure union of relations respectively. 

The restriction to consider finite lattices is motivated by considering information storable in programs variables: such information is $\le 2^k$ where $k$ is the number of bits of the secret variable.

In terms of partitions,  a partition is above another if it is more informative, i.e. each block in the lower partition is included in a block in the above partition

Here is an example of how these equivalence relations can be used in an information flow setting. Let us assume the set of states $\Sigma$ consists of a tuple $\langle l, h\rangle$ where $l$ is an observable, usually called \textit{low}, variable and $h$ is a confidential variable, usually called \textit{high}. One possible observer can be described by the equivalence relation
 \[ \langle l_1,h_1 \rangle \approx \langle l_2,h_2 \rangle \leftrightarrow l_1 = l_2 \]
That is the observer can only distinguish two states whenever they agree on the low variable part. Clearly, a more powerful attacker is the one who can distinguish any two states from one another, or
\[ \langle l_1,h_1 \rangle \sim \langle l_2,h_2 \rangle \leftrightarrow l_1 = l_2 \land h_1 = h_2 \]
The $\sim$-observer gains more information than the $\approx$-observer by comparing states, therefore $\approx\ \sqsubseteq\ \sim$.

\subsection{Lattice of information as a lattice of random variables}\label{SECRV}

A random variable (noted r.v.) is usually defined as a map $X : D \rightarrow \mathbb R$, where $D$ is a finite set with a probability distribution and the real numbers $\mathbb R$ is the range of $X$. For each element $d \in D$, its probability will be denoted $\mu(d)$. For every element $x \in \mathbb R$ we write $\mu(X=x)$ (or often in short $\mu(x)$) to mean the probability that $X$ takes on the value $x$, i.e. $\mu(x) \stackrel{def}{=} \sum_{d \in X^{-1}(x)}{\mu(d)}$. In other words, what we observe by $X = x$ is that the input to $X$ in $D$ belongs to the set $X^{-1}(x)$. From that perspective, $X$ partitions the space $D$ into sets which are indistinguishable to an observer who sees the value that $X$ takes on\footnote{We define an {\em event} for the random variable a block in the partition.}. This can be stated relationally by taking the kernel of $X$ which defines the following equivalence relation $\mbox{ker}(X)$:
\begin{equation}\label{KERNEL}
d\ \mbox{ker}(X)\ d'\ \mbox{iff } X(d) = X(d') 
\end{equation}

Equivalently we write $X \simeq Y$ whenever the following holds
%\begin{equation*}
\[ X \simeq Y \mbox{ iff } \{X^{-1}(x) : x \in \mathbb R\} = \{Y^{-1}(y) : y \in \mathbb R\} \]
%\end{equation*}
and thus if $X \simeq Y$ then $H(X) = H(Y)$.

This shows that each element of the lattice \latset\ can be seen as a random variable.

Given two r.v. $X,Y$ in \latset\  we define the joint random variable $(X,Y)$ as their least upper bound in \latset\ i.e. $X \join Y$. It is easy to verify that $X \join Y$ is the partition obtained by all possible intersections of blocks of $X$ with blocks of $Y$.

\subsection{Basic concepts of Information Theory}
This section contains a very short review of some basic definitions of Information Theory; additional background is readily available both in textbooks (the standard being Cover and Thomas textbook \cite{CoTho}).
Given a space of events with probabilities $P=(p_i)_{i\in N}$ ($N$ is a set of indices) the Shannon's entropy is defined as 

\begin{equation}
H(X) = - \sum_{i \in N}{p_i}\log{p_i}
\end{equation}
%\ssm{H(P)=-\Sigma_{i\in N} p_i {\log}(p_i)}. 
It is usually said that this number measures the average information content of the set of events: if there is an event with probability 1 then the entropy will be 0 and if the distribution is uniform i.e. no event is more likely than any other the entropy is maximal, i.e.  $\log \vert N \vert$. 
In the literature the terms information content and uncertainty in this context are often used interchangeably: both terms refer to the number of possible distinctions on the set of events in the sense we discussed before.

The entropy of a r.v. $X$ is just the entropy of its probability distribution i.e. 
\[{ - \sum_{x \in X}\mu(X=x) \log \mu(X=x)}\]
Given two random variables $X$ and $Y$, the joint entropy $H(X,Y)$ measures the uncertainty of the joint r.v. $(X,Y)$. it Is defined as
%\begin{equation*} 
\[ - \sum_{x \in X, y \in Y}\mu(X=x,Y=y) \log \mu(X=x,Y=y) \]
%\end{equation*}

Conditional entropy \ssm{H(X\vert Y)} measures the uncertainty about $X$ given knowledge of $Y$. 
It is defined as $H(X,Y)-H(Y)$.
The higher \ssm{H(X\vert Y)} is, the lower is the correlation between \ssm X and \ssm Y. It is easy to see that if \ssm X is a function of \ssm Y, then \ssm{H(X\vert Y)=0}  (there is no uncertainty on $X$ knowing $Y$ if $X$ is a function of $Y$) and if  \ssm X and \ssm Y are independent then \ssm{H(X\vert Y)=H(X)} (knowledge of $Y$ doesn't change the uncertainty on $X$ if they are independent) .

Mutual information $I(X;Y)$ is a measure of how much information $X$ and $Y$ share. It can be defined as
\[{ I(X;Y)=H(X)-H(X\vert Y)=H(Y)-H(Y\vert X)}\]
Thus the information shared between $X$ and $Y$ is the information of $X$ (resp $Y$) from which the information about $X$ given $Y$ has been deduced.
This quantity measures the correlation between \ssm X and \ssm Y. 
For example  \ssm X and \ssm Y are independent iff \ssm{I(X;Y)=0}.

Mutual information is a measure of binary {\em interaction}. 
Conditional mutual information, a form of ternary interaction will be used to quantify  {\em leakage}. Conditional mutual information measures the correlation between two random variables conditioned on a third random variable; it is defined as: 
%\begin{equation*}
\[ I(X;Y\vert Z)=H(X\vert Z)-H(X\vert Y,Z)=H(Y\vert Z)-H(Y\vert X,Z) \]
%\end{equation*}

\subsection{Measures on the lattice of information}

Suppose we want attempt to quantify the amount of information provided by a point in the lattice of information.

We could for example associate to a partition $P$  the measure $|P|=$
``number of blocks in $P$". This measure would be 1 for the least informative partition, its maximal value would be the number of atoms and would be reached by the top partition. 
It is also true that $A\por B$ implies $|A|\leq|B|$ so the measure reflects the order of the lattice. An important property of ``additivity" for measures is the inclusion-exclusion principle:  this principle says that things should not be counted twice. In terms of sets, the inclusion-exclusion principle says that the number of elements in a union of sets is the sum of the number of elements of the two sets minus the number of elements in the intersection.
The inclusion-exclusion  principle is universal e.g. in propositional logic the truth value of $A\vee B$ is given by the truth value of $A$ plus the truth value of $B$ minus the truth value of $A\wedge B$.

 in the case of the number of blocks the inclusion-exclusion principle is:
\[ |A\join B| =|A|+|B|-|A\meet B|\] 
Unfortunately this property does not hold. As example, by taking
\[ A=\{\{1,2  \}\{3,4\}\} ,\ B=\{\{1,3\}\{2,4\}\}  \]
as two partitions, then their join and meet will be
\[ A\join B=\{\{1\}\{2  \}\{3\}\{4\}\} ,\ A\meet B=\{\{1,3,2,4\}\}.  \]
hence $ |A\join B| =4 \not= 3=|A|+|B|-|A\meet B|$.

Another problem with the map $|\ |$ is that  when we consider \latset\ as a lattice of random variables the above measure may end up being too crude; in fact, all probabilities are disregarded\footnote{We will see however in later sections how the number of blocks relates to Information Theory and channel capacity } by  $|\ |$.
To address these problems more abstract
lattice theoretic notions have been introduced in the literature \cite{birkhoff}.

A valuation on \latset\ is a real valued map $\nu : $ \latset $\rightarrow \mathbb R$,  
that satisfies the following properties:
\begin{eqnarray}
\nu(X \join Y) \ = \nu(X) + \nu(Y)-\nu(X \meet Y) \label{REALINCLEXCL}\\
X \por Y\ \mbox{ implies }\ \nu(X) \le \nu(Y)\label{ORDERPRES}
\end{eqnarray}
A join semivaluation is a weak valuation, i.e. a real valued map satisfying 
\begin{eqnarray}
 \nu(X \join Y) \le \nu(X) + \nu(Y)-\nu(X \meet Y) \label{INCLEXCL}\\
X \por Y\ \mbox{ implies }\ \nu(X) \le \nu(Y)%\label{ORDERPRES}
\end{eqnarray}
for every element $X$ and $Y$ in a lattice \cite{birkhoff}.
%These are incidentally the two properties of our map $\nu_\join$ described earlier. 
The property (\ref{ORDERPRES}) is order-preserving: a higher element in the lattice has a larger valuation than elements below itself. The first property (\ref{INCLEXCL}) is a weakened inclusion-exclusion principle. 

\begin{proposition}\label{PROP_SEMIVAL}
Entropy is  join semivaluation on \latset\ by defining
\begin{equation}\label{semival1}
\nu (X\join Y)=H(X,Y) 
\end{equation}
\end{proposition}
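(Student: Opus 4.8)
The plan is to first pin down the value of $\nu$ on a single lattice element, and then verify the two clauses in the definition of join semivaluation: the weak inclusion--exclusion inequality (\ref{INCLEXCL}) and order-preservation. Setting $Y=X$ in (\ref{semival1}) and using $X\join X=X$ forces $\nu(X)=H(X,X)=H(X)$, so $\nu$ must be the entropy read off each element of \latset\ via its identification with a random variable (Section \ref{SECRV}); this is unambiguous because $X\simeq Y$ implies $H(X)=H(Y)$. Thus the proposition reduces to two facts: (i) $X\por Y$ implies $H(X)\le H(Y)$, and (ii) $H(X\join Y)\le H(X)+H(Y)-H(X\meet Y)$.

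For (i) I would argue: if $X\por Y$ then by (\ref{ORDER}) every block of $Y$ is contained in a block of $X$, so $X$ is a function of $Y$ and hence $H(X\mid Y)=0$; then $H(X)\le H(X,Y)=H(Y)+H(X\mid Y)=H(Y)$, where the first inequality is the standard fact that a joint entropy dominates its marginals. This gives $\nu(X)\le\nu(Y)$.

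For (ii), the key move is to rewrite the right-hand side with the identity $H(X,Y)=H(X)+H(Y)-I(X;Y)$, which reduces (ii) to $H(X\meet Y)\le I(X;Y)$. Put $Z=X\meet Y$. Since $\meet$ is the greatest lower bound, $Z\por X$ and $Z\por Y$. From $Z\por Y$ and the reasoning in (i), $Z$ is a function of $Y$, hence $H(Z\mid Y)=0$ and $I(Z;Y)=H(Z)$. From $Z\por X$ we have $Z\join X=X$, that is the joint r.v.\ $(Z,X)$ equals $X$; the chain rule for mutual information then gives $I(X;Y)=I((Z,X);Y)=I(Z;Y)+I(X;Y\mid Z)\ge I(Z;Y)$ by non-negativity of conditional mutual information. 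Combining, $H(X\meet Y)=H(Z)=I(Z;Y)\le I(X;Y)$, which is exactly (ii).

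The one step needing care is $I(X;Y)\ge I(X\meet Y;Y)$: it rests on $X\meet Y$ being a genuine common coarsening of both $X$ and $Y$, which is precisely why $\meet$ must be the transitive closure of the union of the two relations, not just their union --- only then is $X\meet Y$ an equivalence relation sitting below both. Once that is granted the inequality is a one-line consequence of the chain rule (equivalently, of the data-processing inequality for the deterministic coarsening $X\mapsto X\meet Y$), and everything else is bookkeeping with the identities recalled in the Information Theory subsection. I expect this to be the main obstacle; the order-preservation clause is essentially immediate.
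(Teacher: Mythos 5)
Your proposal is correct and follows essentially the same route as the paper's proof: reduce inequality (\ref{INCLEXCL}) via $H(X,Y)=H(X)+H(Y)-I(X;Y)$ to showing $H(X\meet Y)\le I(X;Y)$, then use that $X\meet Y$ lies below both $X$ and $Y$ to get $H(X\meet Y)=I(X\meet Y;\cdot)$ and a data-processing/monotonicity bound. The only difference is that you make the paper's two informal steps rigorous (via the chain rule for mutual information and $H(X\mid Y)=0$ for $X\por Y$) and swap the roles of $X$ and $Y$ in the intermediate mutual information, which changes nothing by symmetry.
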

\begin{proof}
Property \ref{ORDERPRES} is well known; for inequality \ref{INCLEXCL} start from the known equality 
\[H(X,Y)=H(X) + H(Y)-I(X;Y)   \]
it will be hence enough to prove that
\[ H(X \meet Y)\leq I(X;Y)   \]
This can be proved by noticing that 
\begin{enumerate}
\item $H(X\meet Y)=I(X\meet Y; X)$ this is clear because $I(X\meet Y; X)$ measure the information shared between $X\meet Y$ and $X$ and because $X\meet Y\por X$ such measure has to be $H(X\meet Y)$
\item $I(X\meet Y; X)\leq I(Y; X)$ this is clear because $X\meet Y\por Y$ hence there is more information shareable between $Y$ and $X$ than between $X\meet Y$ and $X$
 \end{enumerate}

 combining we have
 \[ H(X\meet Y)=I(X\meet Y; X)\leq I(Y; X) \]
 \end{proof}

\subsection{Note: Entropy as the best measure on \latset}
An important result  proved by Nakamura \cite{SemVal} gives a particular importance to Shannon entropy as a measure on \latset. He proved that the \textit{only} probability-based join semivaluation on the lattice of information is Shannon's entropy. It is easy to show that a valuation itself is not definable on this lattice, thus Shannon's entropy is the best approximation to a probability-based valuation on this lattice.

 Nakamura starts by  considering
a family of function $(f_n)_{n\in N}$ such that $f_n$ is defined on a set of $n$ probabilities $p_1,\dots,p_n$ and satisfies:
\begin{enumerate}
\item $f_n$ is continuous 
\item $f_n$ is permutation invariant, i.e. $f_n(p_1,\dots,p_n)=f_n(p_{\pi(1)},\dots,p_{\pi(n)})$ for any permutation $\pi$
\item $f_{n+1}(p_1,\dots,p_n,0)=f_n(p_1,\dots,p_n)$
\end{enumerate}
Such  a family $(f_n)_{n\in N}$ induces a function $F$ on partitions with $n$ blocks $X=\{X_1,\dots,X_n\}$ with block $X_i$ having probability $p_i$:
\[ F(X)=  f_n(p_1,\dots,p_n) \]
Suppose now that 
\begin{enumerate}
\item $F$ is a join-semivaluation on all lattices of partitions
\item If two partitions $X,Y$ are independent (in probability theory sense) then 
\[ F(X\join Y)=F(X)+F(Y)\]
\end{enumerate}

Nakamura's result is then that such a function $F$ is, up to a constant, Shannon's entropy function, i.e.
\[ F(X)=f_n(p_1,\dots,p_n)=-c\sum_{1\leq i\leq n} p_i \log(p_i)   \]

\section{Lattice of Information, expected probability of guessing, expected number of guesses and Entropy}\label{LAB:MAIN}

This section contains the main results of this article, i.e.  correspondence between the order relation of \latset, expected probability of guessing,  expected number of guesses and entropy.

\subsection{Expected probability of guessing}

We want to define, given an equivalence relation,  the average probability of guessing the secret in $n$ tries.

 Given a set $X$  where each element has associated a probability  (w.l.g. we assume the probabilities  being ordered decreasingly i.e. $ \mu(x_i)\ge \mu(x_{i+1})$)
define the probability of guessing the secret in $n$ tries as
\[  g_{n,\mu}(X)=\sum_{1\leq i\leq n} \mu(x_i) \]
Given a partition $X$ and a distribution $\mu$ the probability of guessing the secret in $n$ tries is
\[ G_{n,\mu}(X)=\sum_{X_i\in X} g_{n,\mu}(X_i) \]

As an example consider the partition 
\[ \{ \{ x_1,\dots,x_4\}\{ x_5,x_6\}  \} \]
where the first four atoms have probability $\frac{1}{16}$ each and $x_5,x_6$ have probability $\frac{3}{8}$ each.

Then the average probability of guessing the secret in $2$ tries is $\frac{1}{8}+\frac{3}{4}=\frac{7}{8}$; indeed after the observations and two tries the probability of non guessing the secret is $\frac{1}{8}$ corresponding to not having exhausted all possibilities from the first block.

Notice that the above definition is the same as having a probability distribution on each block, computing the probability of guessing the secret in each block and then taking the weighted average:

\[  G_{n,\mu}(X)=\sum_{X_i\in X} g_{n,\mu}(X_i) = \sum_{X_i\in X} \mu(X_i) \sum_{1\leq j\leq n, x_j\in X_i} \frac{\mu(x_j) }{\mu(X_i)}   \]

When clear from the context we will omit the subscript $\mu$ from $G$ and $g$.

\begin{theorem}\label{Lab:LoIExpProb}
 \[ X\sqsubseteq Y \Leftrightarrow \forall \mu, n. \ G_{n,\mu}(X)\leq G_{n,\mu}(Y) \]
 \end{theorem}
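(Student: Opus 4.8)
The plan is to prove the two implications separately, using throughout that $X \sqsubseteq Y$ says precisely that $Y$ refines $X$: every block $X_i$ of $X$ is a disjoint union $X_i = \bigsqcup_\ell Y_{i,\ell}$ of blocks of $Y$, and conversely the blocks of $Y$ are exactly the family $\{Y_{i,\ell}\}_{i,\ell}$. A convenient preliminary observation, valid for any finite probability-weighted set $S$, is the reformulation
\[ g_{n,\mu}(S) \;=\; \max\{\, \mu(T) : T \subseteq S,\; |T| \le n \,\}, \]
which is immediate since taking the $n$ largest probabilities in $S$ is the same as choosing the subset of $S$ of size at most $n$ with the largest total mass; this also makes it plain that $g_{n,\mu}(S)$ is unaffected by how ties in the ordering $\mu(x_i)\ge\mu(x_{i+1})$ are broken.

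For the forward direction, assume $X \sqsubseteq Y$. Since $G_{n,\mu}(X) = \sum_i g_{n,\mu}(X_i)$ and $G_{n,\mu}(Y) = \sum_i \sum_\ell g_{n,\mu}(Y_{i,\ell})$, it suffices to prove the per-block inequality $g_{n,\mu}(X_i) \le \sum_\ell g_{n,\mu}(Y_{i,\ell})$ for each $i$. Fix $i$ and let $T^* \subseteq X_i$ with $|T^*| \le n$ realize the maximum in the reformulation, so $g_{n,\mu}(X_i) = \mu(T^*)$. Then $T^* = \bigsqcup_\ell (T^* \cap Y_{i,\ell})$, each piece satisfies $|T^* \cap Y_{i,\ell}| \le |T^*| \le n$, hence $\mu(T^* \cap Y_{i,\ell}) \le g_{n,\mu}(Y_{i,\ell})$, and summing over $\ell$ gives $g_{n,\mu}(X_i) = \mu(T^*) = \sum_\ell \mu(T^*\cap Y_{i,\ell}) \le \sum_\ell g_{n,\mu}(Y_{i,\ell})$, as required.

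For the converse I would argue by contraposition. If $X \not\sqsubseteq Y$, then by the definition of the order there are states $\sigma_1,\sigma_2$ with $\sigma_1 \sim_Y \sigma_2$ but $\sigma_1 \not\sim_X \sigma_2$; that is, a single block of $Y$ meets two distinct blocks of $X$. Put $\mu(\sigma_1) = \mu(\sigma_2) = \frac12$ and $\mu(x) = 0$ for all other $x$, and take $n = 1$. In $X$ the points $\sigma_1,\sigma_2$ lie in different blocks, each of which then contributes its own element of mass $\frac12$ while every other block contributes $0$, so $G_{1,\mu}(X) = 1$; in $Y$ they lie in one common block, which contributes only $\max(\mu(\sigma_1),\mu(\sigma_2)) = \frac12$, all other blocks contributing $0$, so $G_{1,\mu}(Y) = \frac12 < 1$. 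Hence it is not the case that $G_{n,\mu}(X) \le G_{n,\mu}(Y)$ holds for all $\mu$ and $n$, which is the contrapositive we wanted.

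I expect the genuine content to sit in the forward direction, specifically in spotting the reformulation of $g_{n,\mu}$ as a maximum over subsets of bounded size; once that is available, the per-block inequality and its summation are bookkeeping. The backward direction is essentially trivial, the only minor points to check being that zero-probability elements in the remaining blocks of $X$ and $Y$ contribute nothing to any $g_{n,\mu}$, and (for the forward direction) that the tie-breaking in the decreasing ordering of probabilities is harmless — both of which follow from the subset reformulation above.
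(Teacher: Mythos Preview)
Your proof is correct and follows the same overall architecture as the paper's: a per-block inequality for the forward direction, and a contrapositive construction of a bad distribution for the backward direction. Two small differences are worth noting. First, your reformulation $g_{n,\mu}(S)=\max\{\mu(T):T\subseteq S,\ |T|\le n\}$ makes the forward step cleaner than the paper's version, which splits the top-$n$ sum of $X_i$ into its $Y_i$- and $Y_j$-parts and then adds nonnegative slack terms $c_i,c_j$; the content is identical, but your formulation dispatches ties and the general multi-block split without further comment. Second, for the converse the paper puts all mass on an entire block $Y_i$ that straddles several $X$-blocks and takes $n=|Y_i|-1$, so that $G_n(X)=1$ while $G_n(Y)$ misses one positive-mass element of $Y_i$; your two-point distribution with $n=1$ is a strictly simpler witness to the same phenomenon and has the pleasant side effect of showing directly that the quantifier $\forall n$ can be replaced by $n=1$, which the paper records separately as a corollary.
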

 \begin{proof}
 Step 1:
 \[ X\sqsubseteq Y \Rightarrow \forall \mu, n. \ G_{n,\mu}(X)\leq G_{n,\mu}(Y) \]
w.l.g. it will be enough to consider a block $X_i$ in $X$ splitting into two blocks $Y_i,Y_j$ in $Y$; we then need to prove that
\[ g_n( {X_i}) \leq g_n( {Y_i}) + g_n( {Y_j})  \]
We can write $g_n( {X_i})= \sum_{i\leq I}\mu( x_i)+\sum_{j\leq J} \mu(x_j)$ where the $x_i$ are elements in the block $Y_i$ and the $x_j$ are in $Y_j$ . We can then write $g_n( {Y_i})$ as $\sum_{i\leq I}\mu( x_i)+c_i$ where $c_i\ge 0$ is the sum of the elements in $g_n( {Y_i})$ which are not in $g_n( {X_i})$ and similarly   $g_n( {Y_j})$ can be written as $\sum_{j\leq J}\mu( x_j)+c_j$.

We have hence
\begin{eqnarray*}
g_n( {Y_i}) + g_n( {Y_j})&=&\\
\sum_{i\leq I} \mu( x_i)+c_i +\sum_{j\leq J} \mu( x_j)+c_j&\ge& \\
\sum_{i\leq I} \mu( x_i)+\sum_{j\leq J} \mu( x_j)&=& \\
 g_n( {X_i})&&
\end{eqnarray*}

Step 2: 
 \[ (\forall \mu, n.\ \  G_n(X)\leq G_n(Y))  \Rightarrow X\sqsubseteq Y\]
Reason by contradiction: suppose $X\not\sqsubseteq Y$, w.l.g. we can then find a block  $Y_i\in Y$ included in two (or more) blocks in $X$; We then take a distribution 0 everywhere apart from the elements in $Y_i$ and apply the previous reasoning, then for this distribution
$G_n(X)\not\leq G_n(Y)$ by taking  $n=|Y_i|-1$
\end{proof}

As an example consider the partitions 
\[ X=\{\{1,2\}\{3,4\}  \},\ \ \ Y=\{ \{1,3\}\{2,4\} \}		\]

$X$ and $Y$ are not order related because no block in $X$ is refined by a block in $Y$ and vice-versa; hence following the theorem we can find distributions and number of guesses ordering them in any order:
for $G(Y)<G(X)$  take the distribution giving $\frac{1}{2}$ to $1,3$ and 0 elsewhere; then $n=|\{1,3\}|-1=1$ and so  we have 
\[G_1(Y)=g_1( \{1,3\})=\frac{1}{2}<\frac{1}{2}+\frac{1}{2}=g_1(\{1,2\})+g_1(\{3,4\})=G_1(X) \]

Likewise for $G(X)<G(Y)$ choose the distribution giving $\frac{1}{2}$ to $1,2$ and 0 elsewhere.

\begin{corollary}\label{Lab:LoIExpProb2}
 \[ X\sqsubseteq Y \Leftrightarrow \forall \mu \ G_{1,\mu}(X)\leq G_{1,\mu}(Y) \]
 \end{corollary}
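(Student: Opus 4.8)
The plan is to obtain this corollary directly from Theorem~\ref{Lab:LoIExpProb}, noting that the only real content is the implication from the inequality to the order. The forward direction $X \sqsubseteq Y \Rightarrow \forall \mu.\ G_{1,\mu}(X) \le G_{1,\mu}(Y)$ is just the special case $n = 1$ of Theorem~\ref{Lab:LoIExpProb}, so nothing needs to be done there.

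For the converse I would argue by contraposition. The key observation is that one cannot simply quote Step~2 of the proof of Theorem~\ref{Lab:LoIExpProb}: that argument refutes the inequality by choosing $n = |Y_i| - 1$ for the offending block $Y_i$, which coincides with $n = 1$ only when $|Y_i| = 2$. So the refutation has to be localized to a single separated \emph{pair} of states rather than to a whole block. Concretely, assume $X \not\sqsubseteq Y$. By the order definition~(\ref{ORDER}) there are states $\sigma_1, \sigma_2$ with $\sigma_1 \sim_Y \sigma_2$ but $\sigma_1 \not\approx_X \sigma_2$, i.e.\ $\sigma_1$ and $\sigma_2$ lie in a common block $Y_i$ of $Y$ but in distinct blocks $X_a \neq X_b$ of $X$. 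Take $\mu$ with $\mu(\sigma_1) = \mu(\sigma_2) = \frac12$ and $\mu = 0$ elsewhere.

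Then I would simply unwind the definitions of $g_{1,\mu}$ and $G_{1,\mu}$. Since $g_{1,\mu}$ of a block is the largest probability of an element of that block and the whole mass of $\mu$ sits inside $Y_i$, we get $g_{1,\mu}(Y_i) = \frac12$ and $g_{1,\mu}(Y_k) = 0$ for every other block $Y_k$ of $Y$, so $G_{1,\mu}(Y) = \frac12$. On the $X$ side, $\sigma_1 \in X_a$ and $\sigma_2 \in X_b$ with $X_a \neq X_b$, hence $g_{1,\mu}(X_a) \ge \mu(\sigma_1) = \frac12$ and $g_{1,\mu}(X_b) \ge \mu(\sigma_2) = \frac12$, giving $G_{1,\mu}(X) \ge 1 > \frac12 = G_{1,\mu}(Y)$. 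This contradicts $\forall \mu.\ G_{1,\mu}(X) \le G_{1,\mu}(Y)$, which establishes the contrapositive and hence the corollary.

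The only subtlety, and it is a mild one, is precisely the point noted above: the generic $n$ construction in Theorem~\ref{Lab:LoIExpProb} degenerates at $n = 1$ for blocks of size greater than two, so the counterexample distribution must be a two-point mass supported on one $X$-separated pair inside a single $Y$-block. Beyond that, the proof is a routine computation with the definitions.
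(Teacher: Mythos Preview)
Your proof is correct, and it follows the same contrapositive strategy as the paper: exhibit a distribution concentrated on a $Y$-block that $X$ separates. The difference is only in the choice of that distribution. The paper simply reuses the Step~2 construction of Theorem~\ref{Lab:LoIExpProb} and asserts that, since the offending block $Y_i$ has at least two elements, one may take $n=1$ in place of $n=|Y_i|-1$. You instead pick a two-point mass on a single $X$-separated pair inside $Y_i$, which makes the $n=1$ computation completely explicit: $G_{1,\mu}(X)=1>\tfrac12=G_{1,\mu}(Y)$.

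Your caveat that ``one cannot simply quote Step~2'' is a little too strong. The paper \emph{does} essentially just quote it, and that is legitimate: with any distribution supported on $Y_i$ that puts positive mass on elements lying in at least two distinct $X$-blocks (e.g.\ uniform on $Y_i$, or your two-point mass), one gets $G_{1,\mu}(X)=\sum_j \max_{x\in X_j}\mu(x)\ge \max_{x\in Y_i}\mu(x)=G_{1,\mu}(Y)$ with the inequality strict because at least two summands are positive. What is true is that the paper does not spell this out, so your explicit two-point choice is cleaner and self-contained. In short: same approach, tidier witness.
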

 \begin{proof}
  Direction $\Rightarrow$ is the same as theorem \ref{Lab:LoIExpProb};  direction $\Leftarrow$ is also similar: just notice that 
 the choice of $n=|Y_i|-1$ in $G_{n,\mu}$ implies $n\ge 1$ because as $Y_i$ is split in several blocks it must have at least 2 elements, hence we can replace $|Y_i|-1$ with 1
 \end{proof}

\subsection{Expected number of guesses}

The expected probability of guessing should be related to the expected number of guesses.

 Given a set $X$  where each element has associated a probability  (w.l.g. we assume the probabilities  being ordered decreasingly i.e. $ \mu(x_i)\ge \mu(x_{i+1})$)
define the expected number of guesses as
\[  NG_{\mu}(X)=\sum_{1\leq i\leq n} i \mu(x_i) \]
Given a partition $X$ and a distribution $\mu$ the expected number of guesses is (we abuse the notation):
\[ NG_{\mu}(X)=\sum_{X_i\in X} NG_{\mu}(X_i) \]

Intuitively the more is known of the secret the less guesses are needed, hence we should expect the $NG$ order to reverse the \latset\    order; consider for example the set $\{a,b,c,d\}$  with probabilities $ \frac{1}{2},\frac{1}{4},\frac{1}{8},\frac{1}{8} $ respectively; we have then
\[NG( \{ \{a,b,c,d\}  \} ) = \frac{15}{8} > \frac{10}{8} = NG( \{ \{a,d\}\{b,c\}  \}  )  \]

We can now show that \latset\ order is the dual of the $NG$ order:

\begin{theorem}
 \[ X\sqsubseteq Y \Leftrightarrow \forall \mu,\ NG_{\mu}(Y)\leq NG_{\mu}(X) \]
 \end{theorem}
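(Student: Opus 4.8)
The plan is to mirror the two-step structure of Theorem~\ref{Lab:LoIExpProb} and, for the forward implication, to reduce matters to it. The bridge is the identity expressing the expected number of guesses as a finite sum of complementary guessing probabilities,
\[ NG_{\mu}(X) \;=\; \sum_{n \geq 0}\bigl(1 - G_{n,\mu}(X)\bigr), \]
in which every term with $n$ at least the size of the largest block of $X$ vanishes, so the series is in fact finite. I would prove this blockwise: write $i = \sum_{1 \le m \le i}1$ inside $\sum_i i\,\mu(x_i)$ and exchange the order of summation, using that $g_{n,\mu}(X_i) = \mu(X_i)$ as soon as $n \ge |X_i|$ and that $\sum_i \mu(X_i) = 1$.

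For the direction $X \sqsubseteq Y \Rightarrow \forall\mu.\ NG_{\mu}(Y) \le NG_{\mu}(X)$, Theorem~\ref{Lab:LoIExpProb} already gives $G_{n,\mu}(X) \le G_{n,\mu}(Y)$ for every $n$ and $\mu$, hence $1 - G_{n,\mu}(X) \ge 1 - G_{n,\mu}(Y) \ge 0$ term by term; summing over $n$ and applying the identity yields $NG_{\mu}(X) \ge NG_{\mu}(Y)$. Alternatively one can argue directly in the style of Step~1 of Theorem~\ref{Lab:LoIExpProb}: it is enough to treat a single block $X_i$ that splits into $Y_i$ and $Y_j$, and to observe that the decreasing enumeration of $X_i$ is a merge of those of $Y_i$ and $Y_j$, so the rank of each element can only increase when passing from its own block to $X_i$; this gives $NG_{\mu}(X_i) \ge NG_{\mu}(Y_i) + NG_{\mu}(Y_j)$.

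For the converse, $(\forall\mu.\ NG_{\mu}(Y) \le NG_{\mu}(X)) \Rightarrow X \sqsubseteq Y$, I would argue by contraposition, along the lines of Step~2 of Theorem~\ref{Lab:LoIExpProb}. If $X \not\sqsubseteq Y$ then some block $Y_i \in Y$ of size $m$ is split among $k \ge 2$ blocks of $X$, say in pieces of sizes $m_1,\dots,m_k \ge 1$ with $\sum_j m_j = m$. Take $\mu$ uniform on $Y_i$ and $0$ elsewhere; since the zero-probability elements of the remaining blocks sit at the tail of each decreasing enumeration and contribute nothing, one computes
\[ NG_{\mu}(Y) = \frac{m+1}{2}, \qquad NG_{\mu}(X) = \frac{1}{2m}\Bigl(\sum_{j}m_j^2 + m\Bigr). \]
Because $\sum_j m_j = m$ with at least two positive summands, $\sum_j m_j^2 < m^2$ strictly, so $NG_{\mu}(X) < NG_{\mu}(Y)$, contradicting the hypothesis.

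The only real content is the identity in the first paragraph; with it in hand the forward direction is essentially free from Theorem~\ref{Lab:LoIExpProb}, and the converse is a short explicit calculation. The one point needing care is that in the converse the chosen distribution is degenerate, so one must verify that zero-probability elements land at the end of the decreasing order and therefore do not inflate $NG_{\mu}(X)$; without this the displayed formula for $NG_{\mu}(X)$ would be incorrect.
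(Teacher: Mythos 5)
Your proposal is correct, but your primary route for the forward direction differs from the paper's. The paper proves $X\sqsubseteq Y \Rightarrow NG_{\mu}(Y)\leq NG_{\mu}(X)$ directly, by the rank-comparison argument you only mention as an alternative: an element $x$ of a block $X_i$ that splits into $Y_i,Y_j$ contributes $j\,\mu(x)$ to $NG_{\mu}(X_i)$ and $j'\mu(x)$ with $j'\leq j$ to whichever of $NG_{\mu}(Y_i),NG_{\mu}(Y_j)$ it lands in, since its rank in the sub-block's decreasing enumeration cannot exceed its rank in $X_i$'s. Your main argument instead reduces the claim to Theorem~\ref{Lab:LoIExpProb} via the tail-sum identity $NG_{\mu}(X)=\sum_{n\geq 0}\bigl(1-G_{n,\mu}(X)\bigr)$, which is a genuinely different (and correct) decomposition: the identity checks out blockwise exactly as you describe, the sum is finite, and termwise comparison of $1-G_{n,\mu}(X)\geq 1-G_{n,\mu}(Y)\geq 0$ gives the result. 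What this buys is a clean structural explanation of \emph{why} the $NG$ order is the reverse of the $G_n$ order (they are related by a complementary summation), at the cost of first having to establish the identity; the paper's argument is more elementary and self-contained. For the converse both you and the paper argue by contraposition with a distribution supported on a block $Y_i$ split across several blocks of $X$, but where the paper merely says to ``apply the above reasoning,'' you make the strictness of the inequality explicit with the uniform-distribution computation $NG_{\mu}(X)=\frac{1}{2m}\bigl(\sum_j m_j^2+m\bigr)<\frac{m+1}{2}=NG_{\mu}(Y)$, using $\sum_j m_j^2<m^2$ for $k\geq 2$ positive parts; your observation that the zero-probability elements must sit at the tail of each decreasing enumeration is exactly the point that needs checking for this formula to be valid, and it holds since ties and zeros contribute nothing to $NG_{\mu}$.
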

\begin{proof}
\[ X\sqsubseteq Y \Rightarrow \forall \mu,\ NG_{\mu}(Y)\leq NG_{\mu}(X) \]

w.l.g. it will be enough to consider a block $X_i$ in $X$ splitting into two blocks $Y_i,Y_j$ in $Y$; 
consider an element $x\in X_i$; this element will appear as a term $j \mu(x)$ in the sum $NG_{\mu}(X_i)$.
As the elements of $X_i$ are split in the two sets $Y_i,Y_j$ then the same $x$ will appear in $NG(Y_i)$ or in $NG(Y_j)$: in any case it will appear as a term $j' \mu(x)$ where $j'\leq j$ because $X_i$ is split in the two sets $Y_i,Y_j$ so the relative order of $x$ in $Y_i$ or $Y_j$ has to be less than the relative order of $x$ in $X_i$. Hence the statement is true.

\[ X\sqsubseteq Y \Leftarrow \forall \mu,\ NG_{\mu}(Y)\leq NG_{\mu}(X) \]

Reason by contradiction: suppose $X\not\sqsubseteq Y$, w.l.g. we can then find a block  $Y_i\in Y$ included in two (or more) blocks in $X$; We then take a distribution 0  everywhere apart from the elements in $Y_i$ and apply the above reasoning: then for this distribution
$NG(Y)\not\leq NG(X)$
\end{proof}

\subsection{Entropy and \latset}
The next fundamental result is about entropy; again we can relate entropy to order in \latset. Two partitions are order related if and only if they are entropy related (in the same direction) for all possible distributions

\begin{theorem}\label{EntLoi}
 \[ X\sqsubseteq Y \Leftrightarrow \forall \mu,\ H_{\mu}(X)\leq H_{\mu}(Y) \]
 \end{theorem}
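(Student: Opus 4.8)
The plan is to prove the two implications separately, reusing the information-theoretic facts already set up in the paper.

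\emph{Forward direction} ($X\sqsubseteq Y \Rightarrow \forall\mu.\ H_\mu(X)\le H_\mu(Y)$). I would note that $X\sqsubseteq Y$ means every block of $Y$ is contained in a block of $X$, so the $X$-block of a state is determined by its $Y$-block; in random-variable terms $X$ is a function of $Y$, hence $H_\mu(X\mid Y)=0$ for every distribution $\mu$. Since $X\sqsubseteq Y$ also gives $X\join Y=Y$, the joint random variable $(X,Y)$ is just $Y$, so $H_\mu(X,Y)=H_\mu(Y)$; combining with the chain rule $H_\mu(X,Y)=H_\mu(X)+H_\mu(Y\mid X)\ge H_\mu(X)$ yields $H_\mu(X)\le H_\mu(Y)$. (Equivalently, this is precisely the order-preservation property (\ref{ORDERPRES}) for the entropy semivaluation of Proposition \ref{PROP_SEMIVAL}, which was already invoked as well known.)

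\emph{Reverse direction} ($\forall\mu.\ H_\mu(X)\le H_\mu(Y)\Rightarrow X\sqsubseteq Y$). I would argue by contradiction, mirroring the proofs of the earlier theorems in this section. Assume $X\not\sqsubseteq Y$. Then by (\ref{ORDER}) there exist states $\sigma_1,\sigma_2$ lying in the same block $Y_i$ of $Y$ but in two \emph{different} blocks of $X$; equivalently, $Y_i$ meets at least two distinct blocks of $X$. Let $\mu$ be the distribution assigning probability $\tfrac12$ to each of $\sigma_1,\sigma_2$ and $0$ to every other state. Under $\mu$ all probability mass of $Y$ sits inside the single block $Y_i$, so the induced distribution on the blocks of $Y$ is a point mass and $H_\mu(Y)=0$; on the other hand the mass is spread over (at least) two distinct blocks of $X$, each of positive probability, so $H_\mu(X)>0=H_\mu(Y)$. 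This contradicts the hypothesis, so $X\sqsubseteq Y$.

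I expect the only real subtlety — needing care rather than cleverness — to be this reverse direction: one must use that the quantifier ``$\forall\mu$'' ranges over \emph{all} distributions, including degenerate ones supported on a single $Y$-block, which is exactly what pins down the refinement, and one must check that collapsing the support onto $Y_i$ genuinely forces $H_\mu(Y)=0$ while keeping $H_\mu(X)$ strictly positive. The forward direction is essentially a restatement of the standard fact that conditioning on a random variable of which the first is a function gives zero conditional entropy, already used in the paper.
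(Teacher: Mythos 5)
Your proof is correct and follows essentially the same route as the paper: the forward direction is the standard monotonicity of entropy under refinement (which the paper simply cites as well known, while you fill in the chain-rule/functional-dependence justification), and the reverse direction is the same contrapositive argument via a degenerate distribution supported on a single $Y$-block that splits across $X$-blocks. Your choice of a two-point distribution with mass $\tfrac12$ on each of $\sigma_1,\sigma_2$ is in fact slightly sharper than the paper's looser ``distribution zero outside $Y_i$'', since it guarantees at least two $X$-blocks receive positive probability.
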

\begin{proof}
Step 1:
 \[ X\sqsubseteq Y \Rightarrow \forall \mu,\ H_{\mu}(X)\leq H_{\mu}(Y) \]
 This is a well known property of entropy: taking larger probabilities reduce entropy and it is a  consequence of the Jensen inequality
 
 Step 2:
  \[ X\sqsubseteq Y \Leftarrow \forall \mu,\ H_{\mu}(X)\leq H_{\mu}(Y) \]
Reason by contraposition; 

 suppose $X\not\sqsubseteq Y$, w.l.g. we can then find a block  $Y_i\in Y$ included in two (or more) blocks in $X$ (say $X_1\dots X_n$); We then take a distribution 0   everywhere apart from the elements in $Y_i$; notice that for such a distribution $\mu(Y_i)=1$ whereas in $X$ there are more two or more blocks with non zero probability: we have hence
 
 \[  H(X)=-\sum_{1\leq i\leq n} \mu(X_i)\log( \mu(X_i)) >  0=- \mu(Y_i)\log( \mu(Y_i))=H(Y)  \]
 \end{proof}

 \subsection{Shannon's order of information}

The Lattice of Information was pioneered in a little known note by Shannon \cite{Sh2} in order to characterise information. 

One of Shannon's motivations was that while Information Theory  is a measure of information it is not a characterisation of it.
Information Theory aims to measure the amount of information of random variables or of some sort of stochastic process: what the information is about is not a concern of the theory, the measure is based on the {\em number of distinctions} available in an information context. 
As an example consider the information-wise very different processes ``flipping a coin" and ``presidential election between two candidate". While the first is a rather inconsequential process and the second may have important consequences they are both contexts allowing for two choices hence they both have an information measure of  (at most) 1 bit. In a context where $n$ choices are possible (a process with $n$ outcomes) the information associated is measured in terms of the number of bits needed to encode those possible choices, so it is at most $\log_2(n)$. Hence completely different information contexts may result in the same information theoretical measure.

We can however try to characterise ``information" using Information Theory. In the above example while ``flipping a coin" and ``presidential election between two candidate" may have the same measure, it is not the case that knowing one of the two gives information about the other, so $H(X|Y)>0$ for $X,Y$ being one of  ``flipping a coin" or  ``presidential election".

Given random variables $X,Y$ Shannon's order is defined by:

\[ X\leq_d Y  \Leftrightarrow H(X|Y)=0 \]
 
The intuition here is that $Y$ provides complete information about $X$, or equivalently $X$ has less information than $ Y$, so $X$ is an abstraction of $Y$ (some information is forgotten).

Shannon also defined the related distance function:
\[ d(X,Y)=H(X|Y)+H(X|Y)\]

The function $d$ and the relation $\leq_d $ are related as follows:

\[ d(X,Y)=0 \Leftrightarrow  X\leq_d Y \wedge Y\leq_d X \]

In fact suppose $d(X,Y)=0$; then  $H(X|Y)+H(X|Y)=0$ so as conditional entropy is non negative $X\leq_d Y \wedge Y\leq_d X$.
On the other hand  $X\leq_d Y \wedge Y\leq_d X$ implies $H(X|Y)=0, H(Y|X)=0$ so $d(X,Y)=0$.

The equivalence classes of the order $\leq_d $, i.e. points s.t. $X\leq_d Y \wedge Y\leq_d X$ or equivalently  the sets of points of distance 0, are the information theoretical characterization of information: all items in a class can be seen as objects having {\em the same} information, not just sharing the same measure.

Shannon's order and \latset\ order are the same:

\begin{theorem}
\[ X \sqsubseteq Y\Leftrightarrow \forall \mu.\  X\leq_d Y \]
\end{theorem}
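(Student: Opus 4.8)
The plan is to reduce the claim to Theorem~\ref{EntLoi} by re-expressing Shannon's order through the entropy of a join. Recall from Section~\ref{SECRV} and Proposition~\ref{PROP_SEMIVAL} that the joint random variable $(X,Y)$ is exactly the lattice join $X\join Y$, and that its entropy is the join semivaluation, so $H(X,Y)=H_\mu(X\join Y)$. Consequently, for a fixed distribution $\mu$,
\[
X\leq_d Y \ \Longleftrightarrow\ H(X\mid Y)=0 \ \Longleftrightarrow\ H_\mu(X\join Y)-H_\mu(Y)=0 \ \Longleftrightarrow\ H_\mu(X\join Y)=H_\mu(Y).
\]
Since $Y\por X\join Y$ always holds, the order-preservation direction of Theorem~\ref{EntLoi} gives $H_\mu(Y)\le H_\mu(X\join Y)$ for every $\mu$; hence "$H_\mu(X\join Y)=H_\mu(Y)$ for all $\mu$" is equivalent to "$H_\mu(X\join Y)\le H_\mu(Y)$ for all $\mu$", which by the converse direction of Theorem~\ref{EntLoi} is equivalent to $X\join Y\por Y$. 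Combined once more with $Y\por X\join Y$, this says $X\join Y=Y$, i.e. $X\por Y$. Chaining these equivalences proves the theorem.

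\textbf{Forward implication (more concretely).} As a self-contained alternative to the chain above, I would first argue $X\por Y\Rightarrow\forall\mu.\ X\leq_d Y$ directly: $X\por Y$ means every block of $Y$ is contained in a single block of $X$, so the $X$-class of a point is a function of its $Y$-class, and by the basic fact recalled in Section~\ref{BASIC} (a function of $Y$ has zero conditional entropy given $Y$) we get $H(X\mid Y)=0$ for every $\mu$. Equivalently, $X\por Y$ forces $X\join Y=Y$, so $H_\mu(X\join Y)=H_\mu(Y)$ trivially.

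\textbf{Converse (contrapositive, concrete version).} For the other direction I would either invoke Theorem~\ref{EntLoi} as above, or reason by contraposition exactly as in the proofs of Theorems~\ref{Lab:LoIExpProb} and~\ref{EntLoi}: if $X\not\por Y$ there is a block $Y_i\in Y$ that meets two distinct blocks $X_1,X_2\in X$ nontrivially; choosing $\mu$ with $\mu(Y_i)=1$ and $\mu(Y_i\cap X_1),\mu(Y_i\cap X_2)>0$ makes $H(X\mid Y)=H(X\mid Y=Y_i)>0$, so $X\leq_d Y$ fails for this $\mu$.

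\textbf{Main obstacle.} There is no serious difficulty here; the proof is essentially a bookkeeping identity ($H(X\mid Y)=\nu(X\join Y)-\nu(Y)$) plus a reuse of Theorem~\ref{EntLoi}. The only point that genuinely requires care is the universal quantifier over $\mu$: for a single distribution, $H(X\mid Y)=0$ constrains the partition structure only on the support of $\mu$, so one must range over all $\mu$ (or, in the concrete version, place mass on an offending $Y$-block split across two $X$-blocks) to recover the full refinement relation $X\por Y$.
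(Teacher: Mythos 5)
Your proposal is correct and follows essentially the same route as the paper: both rewrite $H(X\vert Y)$ as $H(X\join Y)-H(Y)$ using $X\por Y\Leftrightarrow X\join Y=Y$ for the forward direction, and both handle the converse by producing a distribution witnessing $H_\mu(X\join Y)>H_\mu(Y)$ via Theorem~\ref{EntLoi}. If anything your chain of equivalences is slightly tidier than the paper's converse, which swaps the roles of $X$ and $Y$ in passing (writing $X\sqsubset X\sqcup Y$ and $H(Y\vert X)$ where $Y\sqsubset X\sqcup Y$ and $H(X\vert Y)$ are intended); your version states the correct inequalities.
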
 
\begin{proof}
Direction $ X \sqsubseteq Y\Rightarrow  \forall \mu. X\leq_d Y $:

By definition of join in a lattice 
\[ X \sqsubseteq Y\Leftrightarrow X\sqcup Y=Y \]
hence we have
\[ X \sqsubseteq Y\Leftrightarrow H(X,Y)=H( X\sqcup Y)=H(Y)\]
and so
\[  H(X|Y) = H(X,Y)-H(Y) =H(Y)-H(Y)=0 \]
which proves $\forall \mu.\  X\leq_d Y$

For the other direction assuming $X \not\sqsubseteq Y$ then $X\sqsubset X\sqcup Y $ so we can find a distribution s.t. 
$H( X\sqcup Y )>H(X)$ and so 
\[ H(Y|X)=H( X\sqcup Y )-H(X)> H(X)-H(X)=0  \]

and we conclude $X\not\leq_d Y$
\end{proof}

Shannon also noticed that $d$ defines a pseudometric and so the quotient space by the equivalence classes of points of distance 0 is a metric space.

%%%
\section{Measuring leakage of programs}\label{LEAKDEF}

Now we want to connect \latset\ with leakage of confidential information in programs.

\subsection{Observations over programs}
Observations over a program $P$ form an equivalence relation on states of $P$. A particular equivalence class will be called an observable. Hence an observable is a set of states indistinguishable by an attacker making that observation.

\newcommand{\Loi}{{\tt LoI }}
\newcommand{\s}[1]{{\tt {$#1 $}}}

The above intuition can be formalized in terms of several program semantics.
We will concentrate here on a specific class of observations: the output observations  \cite{JCS09,POPL07}.
For this observation the random variable associated to a program $P$ is the equivalence relation on any two  states $\sigma, \sigma'$ from the universe of states $\Sigma$ defined by
\begin{equation}\label{LoI(P)}
\sigma \simeq \sigma' \iff \sem{P}(\sigma) = \sem{P}(\sigma') 
\end{equation}
where $\sem{P}$ represents the denotational semantics of $P$. Hence the equivalence relation amounts to`` have the same observable output''. We denote the interpretation of a program $P$ in \latset\   as defined by the equivalence relation (\ref{LoI(P)}) by $\Loi(P)$. 
According to denotational semantics commands are considered as state transformers, informally maps which change the values of variables in the memory; similarly, language expressions are interpreted as maps from the memory to values. The equivalence relation $\Loi(P)$ is hence nothing else than the set-theoretical kernel of the denotational semantic of $P$. Assuming that the set of confidential inputs $h$ is equipped with a probability distribution $\mu$ we can see $\Loi_{\mu}(P)$ as a random variable. We will write simply $\Loi(P)$ unless we need to specify a specific distribution $\mu$.

\subsection{ \latset\ interpretation of programs and basic properties}

In this paper we will consider the well known {\bf while} programming language \cite{W93}, that is  a simple imperative language with assignments, sequencing,
conditionals and loops.  Syntax and semantics for the language are standard, as in e.g. \cite{W93}. The expressions of the language are arithmetic expression, with constants \ssm{0,1,\dots} and boolean expressions with constants \s{{\tt tt,ff}}.

To see a concrete example, let $P$ be the program
\begin{quote}
\begin{verbatim}
if (h==0) then x=0; else x=1;
\end{verbatim}
\end{quote}
 where the variable \texttt{h} ranges over $\{0,1,2,3\}$. 
 We will assume for the time being that in all program we consider the low variables are initialized in the code; we will discuss this assumption in section \ref{LOW}.

The equivalence relation (i.e. partition) $\Loi(P)$ associated to the above program is then
\begin{eqnarray*}
\Loi(P) &=& \{\underbrace{\{0\}}_{\mbox{{\tt x=0}}} \underbrace{\{1,2,3\}}_{{\tt x=1}} \}
\end{eqnarray*}
$\Loi(P)$ effectively partitions the domain of the variable $h$, where each disjoint subset represents an output. The partition reflects the idea of what an attacker can learn of secret inputs by \textit{backwards} analysis of the program, from the outputs to the inputs.

The quantitative evaluation of the partition $\Loi(P)$ measures such knowledge gains of an attacker, solely depending on the partition of states and the probability distribution of the input.

\subsection{Definition of leakage}
%\bigskip
Let us start from the following intuition
\begin{quote}
The leakage of confidential information of a program is defined as
the difference between an attacker's uncertainty about the secret before and after available observations about the program.
\end{quote}
 For a Shannon-based measure, the above intuition can be expressed in terms of conditional mutual information.
 In fact if we start by observing that the  attacker uncertainty about the secret before observations is $H(h|l)$
 and the  attacker uncertainty about the secret after observations is $H(h|l,\Loi(P))$ then using the definition of conditional mutual information we define leakage as
 \[ H(h|l) - H(h|l,\Loi(P)) = I(h;\Loi(P)|l)  \]
 
We can now simplify the above definition as follows
\begin{eqnarray}
I(\Loi(P);h\vert l) &=&  H(h|l) - H(h|l,\Loi(P)) \nonumber\\
&=_{\ }& H(\Loi(P)\vert l ) - H(\Loi(P)\vert l,h) \nonumber \\
&=_A& \ H(\Loi(P)\vert l ) - 0 \nonumber\\
&=& H(\Loi(P)\vert l ) \nonumber \\
&=_B&\ H(\Loi(P))\label{DEFLEAK}
\end{eqnarray}
where in the first equality we used the symmetry of conditional mutual information; the equality $A$ holds because the program is deterministic and $B$ holds when the program only depends on the high inputs, for example when all low variables are initialised in the code of the program; we will discuss this assumption in the next section. Thus, for such programs
\begin{quote}
{\bf Leakage:} (Shannon-based) leakage of a program $P$ is defined as the (Shannon) entropy of the partition $\Loi(P)$.
\end{quote}

We can now apply the results from section \ref{LAB:MAIN} in the context of programs, hence we deduce a correspondence between the refinement order of the observations, leakage, expected probability of guessing and expected number of guesses.

In terms of programs the results  from section \ref{LAB:MAIN} state the following equivalences:
\begin{enumerate}
\item $\Loi(P) \por \Loi(P')$
\item $\forall\mu. \ \ H_{\mu}(\Loi(P)) \leq H_{\mu}(\Loi(P'))$
\item  $\forall n, \mu. \ \ G_{n,\mu}(\Loi(P)) \leq G_{n,\mu}(\Loi(P'))$
\item   $\forall  \mu. \ \ NG_{\mu}(\Loi(P')) \leq NG_{\mu}(\Loi(P))$
\end{enumerate}
In words:
The equivalence relation associated to a program $P$ is refined by the equivalence relation associated to a program $P'$ if and only if for all distributions the leakage of $P$ is less than the leakage of $P'$, if and only if for any number of tries and any distribution the expected probability of guessing the secret is less according to $P$ than it is according to $P'$, if and only if for all distributions the expected number of guesses required to guess the secret according to $P$ is greater than the expected  number of guesses required to guess the secret according to $P'$.
\subsection{Relation with Yasuoka and Terauchi ordering results}
These order results are related to some recent work by Yasuoka and Terauchi \cite{ter1}; they define quantitative analysis in terms of Shannon entropy,  Smith's vulnerability and guessability.

 Their definitions follows the pattern we discussed before:
\begin{quote}
The quantitative analysis of confidential information of a program is defined as
the difference between an attacker's {\em capability}  before and after available observations about the program.
\end{quote}

By replacing the word ``capability" with: {\em (A) uncertainty about the secret, (B) probability of guessing the secret in one try, (C) expected number of guesses} we derive different quantitative analysis. Once formalized (A)(B)(C) as a function $F$  (and also its conditional counterpart $F(-|-)$ ) on a probability space all these definitions  will have the form:
\[ F(h|l) - F(h|l, \Loi(P)) \]
Formally the choices for $F,F(-|-)$ are:
\begin{enumerate}
\item[(A)] for {\em uncertainty about the secret} $F$ and $F(-|-)$ are Shannon entropy and conditional entropy
\item[(B)] for {\em probability of guessing in one try} (noted ME)
\[F(X)=-\log(\max _{x\in X}\mu(X=x))\mbox{ and }F(X|Y)=-\log(\sum_{y\in Y} \mu(y) (\max_{x\in X}\mu(X=x|Y=y  )) ) \]
\item[(C)] for {\em the expected number of guesses} (noted GE)
\[F(X)= \sum_{x_i\in X, i\ge 1}i \ \mu(X=x_i) \mbox{ and }F(X|Y)=\sum_{y\in Y} \mu(y) (\sum_{x_i\in X, i\ge 1} i \mu(X=x_i|Y=y  ))  \]
(assuming $i<j$ implies $\mu(X=x_i)\ge \mu(X=x_j)$)
\end{enumerate}

Shannon's entropy is unique in that
\begin{enumerate}
\item conditional mutual information is symmetric, so for $F$ being Shannon's entropy.
\[  F(h|l) - F(h|l, \Loi(P))= F(\Loi(P)|l) - F(\Loi(P)|l, h)  \]
and 
\item entropy of the result of a function given its arguments is 0 so 
\[ F(\Loi(P)|l,h ) = 0 \]
\end{enumerate}

In particular and again considering low inputs intialized in the program 
 it is only when $F$ is Shannon's entropy that 
\[ F(h) - F(h| \Loi(P))=F( \Loi(P)) \]
what this mean is that
\begin{quote}
It is only when using Shannon entropy that leakage as the difference in capability before and after observations is a measure on \latset
\end{quote}

We now want to relate results from  section \ref{LAB:MAIN} with ME and GE definitions of leakage.

To appreciate the difference in the definitions let's consider the examples from \cite{ter1}; we consider the following programs:
\begin{enumerate}
\item $M_1 \ \equiv \  {\tt  if (h== 1) o=0; else \ o=1;  } $
\item $M_2 \ \equiv \ {\tt  o=h;  }$
\end{enumerate}

Table \ref{Tera} shows the results of analyses of these programs for a 2 bits secret uniformly distributed. Columns H, G, NG corresponds to our definitions for Shannon entropy, the expected probability of guessing (in 1 guess) and the expected number of guesses on $\Loi(P)$, i.e.  
H, G, NG stands for $H(\Loi(P)), G(\Loi(P)), NG(\Loi(P))$.
ME and GE corresponds to the definitions in \cite{ter1} for computing the  min entropy and the guessing entropy on $P$; the final two columns ME' and GE' corresponds to apply the definitions in \cite{ter1} directly to $\Loi(P)$. For example $ME(M_1)=ME(h)-ME(h|\Loi(M_1))$ and $ME'(M_1)=ME(\Loi(M_1))$.

\begin{table}[htdp]
\caption{comparing measures}
\begin{center}
\begin{tabular}{|c|c|c|c|c|c|c|c|}
\hline
&H & G & NG & ME & GE & ME' & GE'  \\
\hline
$M_1 $& 0.8112 &  0.5 & 1.75 & 1 & 0.75 & 0.415 & 1.25\\
$M_2 $& 2 &  1 & 1 & 2 & 1.5  & 2 & 2.5\\
\hline
\end{tabular}
\end{center}
\label{Tera}
\end{table}%

The results express different ideas which can be connected in a uniform narrative. Take program $M_1:$ $G=0.5$  means after running the program an attacker has probability 0.5 of guessing the secret in one try. The chances of guessing the secret have doubled from 0.25 (before the program) to 0.5 (after the program), so the rate of increase is $2^{ME(M_1)}=2^1$; the average number of questions needed (initially 2.5) has been reduced by 0.75 (GE=0.75) so that it will take now on average to guess it NG=1.75 tries. And the observations provide 0.8112 bits of information about the secret.  

Consider now the second row, i.e. program $M_2$: here $H=2$ means that everything is leaked, i.e. the observations provide 2 bits of information about the secret. In this case we are sure to guess the secret in one try (G=1, NG=1) and our chances have hence increased 4 folds from the initial probabilities ($2^{ME(M_2)}=2^2$ so $0.25*2^{ME(M_2)}=1$);  the average number of questions needed (initially 2.5) has been reduced by 1.5 (GE=1.5) to one (NG=1). 

We have left out from the narrative the measures $ME',GE'$. The reason is that they seem of limited interest; for example $ME'$ will always pick the most likely observation and disregard all the others: a dubious security measure.

The narrative can be strengthened formally:
\begin{proposition}\label{ME}
For a program $P$
\begin{enumerate}
\item $\forall\mu.\  2^{(ME(P))} G(h)= G(\Loi(P))  $
\item $\forall\mu. \ GE(P)=NG(h)-NG(\Loi(P))  $

\end{enumerate}

\end{proposition}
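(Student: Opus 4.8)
The plan is to prove both identities by unfolding the definitions of $ME$, $GE$, $G$ and $NG$, the entire content being a single observation about conditioning on a block. Write $\Loi(P)=\{B_1,\dots,B_k\}$ for the partition of the domain $D$ of $h$ induced by $\sem P$ (with $B_j=\sem P^{-1}(o_j)$ the $j$-th output fibre) and $\mu(B_j)=\sum_{h\in B_j}\mu(h)$. Since the low variables are assumed initialised in the code, $l$ is constant, so $ME(P)=ME(h)-ME(h\mid\Loi(P))$ and $GE(P)=GE(h)-GE(h\mid\Loi(P))$. Here $G(h)$ and $NG(h)$ denote the \emph{prior} quantities, i.e. $g_{1,\mu}$ and $NG_\mu$ applied to $D$ regarded as a single block: $G(h)=\max_h\mu(h)$ and $NG(h)=\sum_{i\ge 1}i\,\mu(h_i)$ with $D$ listed by decreasing probability. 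Finally $\log$ is base $2$, so that $2^{ME(P)}$ is meaningful.

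The first step I would record is the conditioning lemma: for $h\in B_j$ one has $\mu(h\mid B_j)=\mu(h)/\mu(B_j)$, hence the listing of the elements of $B_j$ by decreasing probability is the same under $\mu$ and under $\mu(\cdot\mid B_j)$, and
\[ \mu(B_j)\max_{h\in B_j}\mu(h\mid B_j)=\max_{h\in B_j}\mu(h)=g_{1,\mu}(B_j), \qquad \mu(B_j)\!\!\sum_{h_i\in B_j}\!\! i\,\mu(h_i\mid B_j)=\sum_{h_i\in B_j} i\,\mu(h_i)=NG_\mu(B_j), \]
where in each case $i$ is the rank of $h_i$ within $B_j$. (For $h\notin B_j$ the conditional probability vanishes, so the $\max$/$\sum$ over all of $D$ reduces to the one over $B_j$.)

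For item (1) I would then compute $ME(h)=-\log(\max_h\mu(h))=-\log G(h)$, and $ME(h\mid\Loi(P))=-\log\big(\sum_j\mu(B_j)\max_{h\in B_j}\mu(h\mid B_j)\big)$, whose argument equals $\sum_j g_{1,\mu}(B_j)=G_{1,\mu}(\Loi(P))=G(\Loi(P))$ by the lemma; hence $ME(P)=\log\big(G(\Loi(P))/G(h)\big)$, and exponentiating yields $2^{ME(P)}G(h)=G(\Loi(P))$. For item (2), $GE(h)=\sum_{i\ge 1}i\,\mu(h_i)=NG(h)$ is immediate from the definitions, while $GE(h\mid\Loi(P))=\sum_j\mu(B_j)\sum_{h_i\in B_j}i\,\mu(h_i\mid B_j)=\sum_j NG_\mu(B_j)=NG_\mu(\Loi(P))=NG(\Loi(P))$ by the lemma; subtracting gives $GE(P)=NG(h)-NG(\Loi(P))$.

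I do not expect a genuine obstacle: the proof is bookkeeping. The two points needing care are exactly those isolated above — that within a block the decreasing-probability ordering (hence the index $i$ used in the $GE$ and $NG$ sums) is unaffected by replacing $\mu$ with the conditional $\mu(\cdot\mid B_j)$, and that $G(-)$ and $NG(-)$ carry two distinct meanings here, namely the prior guessing probability / expected number of guesses evaluated on $D$ as a single block, versus $G_{1,\mu}$ / $NG_\mu$ evaluated on a partition.
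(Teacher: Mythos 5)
Your proof is correct and follows essentially the same route as the paper: unfold Smith's definitions of $ME$ and $GE$, observe that $\mu(B_j)\max_{h\in B_j}\mu(h\mid B_j)=\max_{h\in B_j}\mu(h)=g_{1,\mu}(B_j)$ and the analogous identity for the rank-weighted sums, and then the two claims reduce to $\sum_j g_{1,\mu}(B_j)=G(\Loi(P))$ and $GE(h\mid\Loi(P))=NG(\Loi(P))$. The only difference is that you make explicit the conditioning lemma (including the point that the decreasing-probability ordering within a block is preserved under conditioning), which the paper's proof uses silently.
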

\begin{proof}
(1) We start by recalling Smith's definition of vulnerability:
\[ ME(P)= \log\frac{1}{\max_h \mu(h)} -  \log\frac{1}{\sum_{o\in\Loi(P)} \max_h (h|o)} \]
We have then $\forall\mu $
\begin{eqnarray*}
 2^{ME(P)} G(h)&=& 2^{\log\frac{1}{\max_h \mu(h)} -  \log\frac{1}{\sum_{o\in\Loi(P)} \max_h (h|o)}}G(h)\\
&=& 2^{   \log{\sum_{o\in\Loi(P) } \max_h (h|o) }  - \log{\max_h \mu(h)} }G(h)\\
&=& \frac{2^{   \log{\sum_{o\in\Loi(P) } \max_h (h|o) } }} {2^{ \log{\max_h \mu(h)} }}G(h)\\
&=& \frac{\sum_{o\in\Loi(P) } \max_h (h|o) } {\max_h \mu(h) }G(h)\\
&=&	\frac{\sum_{o\in\Loi(P) } \max_h (h|o) } {\max_h \mu(h) }  \max_h \mu(h)\\
&=&	{\sum_{o\in\Loi(P) } \max_h (h|o) }\\
&=& G(\Loi(P))
\end{eqnarray*}
(2)
We can rewrite the definition of $GE(P)$ from \cite{ter1} as:
\[   GE(P)=\sum_{1\leq i\leq n}  i\mu(h_i) - \sum_{o\in \Loi(P)} \sum_{h_i\in o, 1\leq i\leq m} i\mu(h_i)  \]
It is easy to see that the first term coincides with our definition on $NG$ on sets and the second term with our definition of $NG$ on partitions; the result then follows.
\end{proof}

The connections between these concepts extends to the orders they induce:

\begin{theorem}
Given programs $P,P'$ (non depending on the low inputs) the following are equivalent:
\begin{enumerate}
\item $\Loi(P) \por \Loi(P')$
\item $\forall \mu.\ \Loi(P) \leq_d \Loi(P')$
\item $\forall\mu. \ \ H_{\mu}(\Loi(P)) \leq H_{\mu}(\Loi(P'))$
\item $\forall n, \mu. \ \ G_{n,\mu}(\Loi(P)) \leq G_{n,\mu}(\Loi(P'))$
\item $\forall  \mu. \ \ NG_{\mu}(\Loi(P')) \leq NG_{\mu}(\Loi(P))$
\item $\forall \mu. \ \ ME_{\mu}(P) \leq ME_{\mu}(P')$
\item $\forall \mu. \ \ GE_{\mu}(P) \leq GE_{\mu}(P')$
\end{enumerate}
\end{theorem}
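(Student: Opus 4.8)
The plan is to prove the theorem as a cycle (or fan) of implications anchored on condition (1), $\Loi(P) \por \Loi(P')$, reusing the equivalences already established in Section~\ref{LAB:MAIN}. Observe first that conditions (3), (4), (5) and (2) were each shown in that section to be \emph{equivalent} to $X \por Y$ for arbitrary partitions $X, Y$ (Theorems~\ref{EntLoi}, \ref{Lab:LoIExpProb}, the expected-number-of-guesses theorem, and the Shannon-order theorem respectively); instantiating $X = \Loi(P)$, $Y = \Loi(P')$ immediately gives $(1)\Leftrightarrow(2)\Leftrightarrow(3)\Leftrightarrow(4)\Leftrightarrow(5)$ with no further work. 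So the only genuinely new content is folding conditions (6) and (7) --- the $ME$ and $GE$ orderings, which are defined on the \emph{programs} $P, P'$ rather than directly on the lattice points --- into this chain.

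For that, I would lean on Proposition~\ref{ME}. Part~(1) of that proposition gives, for every $\mu$, the identity $2^{ME_\mu(P)} G_\mu(h) = G_{1,\mu}(\Loi(P))$, where $G_\mu(h) = \max_h \mu(h)$ depends only on the distribution on the secret and not on the program. Hence for a fixed $\mu$, $ME_\mu(P) \le ME_\mu(P')$ holds if and only if $G_{1,\mu}(\Loi(P)) \le G_{1,\mu}(\Loi(P'))$ --- because $t \mapsto 2^t$ is strictly increasing and the common positive factor $G_\mu(h)$ cancels. Quantifying over all $\mu$ and invoking Corollary~\ref{Lab:LoIExpProb2} (which says the one-try case $n=1$ already suffices to recover the lattice order) yields $(6) \Leftrightarrow (4)$, closing that condition into the chain. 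Similarly, Proposition~\ref{ME}(2) gives $GE_\mu(P) = NG_\mu(h) - NG_\mu(\Loi(P))$ with $NG_\mu(h)$ again a program-independent constant; so $GE_\mu(P) \le GE_\mu(P')$ iff $NG_\mu(\Loi(P')) \le NG_\mu(\Loi(P))$ (note the reversal from subtracting), and quantifying over $\mu$ gives $(7) \Leftrightarrow (5)$.

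Two small points deserve care rather than constituting real obstacles. First, the hypothesis that $P$ and $P'$ do not depend on the low inputs is exactly what makes leakage equal the entropy/guessing quantities \emph{of the partition} $\Loi(P)$ via equation~(\ref{DEFLEAK}) and the analogous simplifications for $ME$ and $GE$; without it the cancellation of the program-independent constant would not licence passing between the program-level orderings and the lattice-level ones, so I would state this dependence explicitly where Proposition~\ref{ME} is applied. Second, in the $GE$/$NG$ case one must track the order reversal: the lattice order $\Loi(P)\por\Loi(P')$ corresponds to $NG$ \emph{decreasing}, and since $GE$ is $NG(h)$ minus $NG(\Loi(P))$, subtracting flips it back, so $GE$ increases with the lattice order --- consistent with statement (7). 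The main thing to watch is therefore just bookkeeping of these two sign/direction inversions; there is no deep new inequality to prove, as all the analytic content (Jensen's inequality for entropy, the block-splitting arguments for $G$ and $NG$) already lives in Section~\ref{LAB:MAIN} and in Proposition~\ref{ME}.
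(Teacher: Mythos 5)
Your proposal is correct and follows essentially the same route as the paper: the equivalences $(1)\Leftrightarrow(2),(3),(4),(5)$ are imported directly from the results of Section~\ref{LAB:MAIN}, and $(6)$ and $(7)$ are folded into the chain via Proposition~\ref{ME} combined with Corollary~\ref{Lab:LoIExpProb2} (for $ME$) and the $NG$ theorem (for $GE$), which is exactly the paper's argument. Your explicit remarks on the cancellation of the program-independent factors $G_\mu(h)$ and $NG_\mu(h)$ and on the sign reversal in the $GE$/$NG$ case only make the bookkeeping more transparent than the paper's terse version.
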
 

\begin{proof}
equivalence $1\Leftrightarrow 3$ was first proved in \cite{qa09}, equivalences $1\Leftrightarrow 2,4,5$ proved in section \ref{LAB:MAIN};
equivalences   $1\Leftrightarrow 3,6,7$ are proven in \cite{ter1}.
It may be however interesting to reprove the equivalences  in \cite{ter1} using the algebraic techniques and results from this paper. For example we can prove $1\Leftrightarrow 6$ as follows:
\begin{eqnarray*}
\Loi(P) \por \Loi(P') & \Leftrightarrow& \forall\mu.\ G(\Loi(P)) \leq G(\Loi(P'))\\
& \Leftrightarrow& \forall\mu.\  2^{(ME(P))} G(h)  \leq  2^{(ME(P'))} G(h)  \\
& \Leftrightarrow&\forall\mu.\  ME(P) \leq ME(P')
\end{eqnarray*}
where the first equivalence is corollary \ref{Lab:LoIExpProb2} and the second is proposition \ref{ME}.

$1\Leftrightarrow 7$ follows from Proposition \ref{ME}(2):just rewrite it as $GE(P)=NG(h)-NG(\Loi(P))$
\end{proof}

Hence we conclude that  in terms of the induced orderings all these quantitative analyses are consistent. In other words it is only on programs not ordered on $\latset$ that these notions can really differ.
One such difference is now discussed.

\subsection{Discussion on Smith's argument on the foundations of Quantitative Information Flow}
Consider  the following two programs \cite{Fossacs09}:
\begin{enumerate}
\item $P_1 \ \equiv \  {\tt if\ (h \% 8==0)\  o= h; \ else\ o=1; } $
\item $P_2 \ \equiv \ {\tt o \ =\ h\& 037; }$
\end{enumerate}

The program $P_1$ will return the value of $h$ when the last three bits of the secret are 0s and will return 1 otherwise; 
its \latset\ interpretation will hence be  the partition of the form

\[ X=\{ \{h_1000\},\dots,\{h_m000 \} , X_1 \}  \]
where the $h_i$ are arbitrary binary string of length $k-3$.

The program $P_2$ copies the last 5 bits of the secret in $o$ (here $037$ is the octal constant and $\&$ the bitwise and).
The partition associated has hence the shape
\[ Y= \{    Y_1,\dots, Y_r  \}  \] 
where each $Y_i$ is a set of string with the same 5 last bits.

Smith's argument is that under uniform distribution and for a secret of size $8k$ bits the two programs have a very similar entropy but they have a very different guessing behaviour; in the case of the first program in fact with probability one eight  the whole secret is revealed, while in the second program all attempts reveal the last 5 bits of the secret but give no indication of what the remaining bits are. Hence in general it is much easier to guess the secret in one try after running  the first program than it is to guess the secret after running the second one.

The argument however relies on choosing a particular distribution; this choice is independent from the source code and should, we believe, be clearly separated from the leakage inherent to the code. 

In fact since
the partitions $X,Y$ are unrelated in \latset, by the  results from section \ref{LAB:MAIN}  we can find distributions and number of guesses that make one's expected guessing probability less than the other.

For $G_n(X)<G_n(Y)$ notice that $X_1$ splits in many blocks $Y_i$: hence take any distribution non zero only on the atoms in $X_1$ e.g. let's consider the uniform distribution on the atoms in $X_1$ and take $n=|X_1|-1$.

Then 
\[G_n(X)=g_n(X_1)=\frac{n}{n+1}<1=G_n(Y)  \]

To make $G_n(X)>G_n(Y)$ pick any block $Y_i$ in $Y$ whose last three bits are 0s; then this block is split in many $X_i$s in $X$, again by taking the distribution uniform over the elements of $Y_i$ and 0 otherwise and taking $n=|Y_i|-1$ we have
\[G_n(Y)=g_n(Y_1)=\frac{n}{n+1}<1=G_n(X)  \]
In fact  all distributions giving probability 0 to all values divisible by 8 will favour program $P_2$ even when we consider a single guess (n=1), and things don't change when we take $ME$ instead of $G_n$.

Similarly we can find distributions that make the expected number of guesses of any of the two programs less than the expected number of guesses of the other program. In particular while for the uniform distribution it is much easier to guess the secret in the case of the first program compared to the second (which is at the heart of Smith's argument), by choosing the distribution zero everywhere apart from the block $X_1$ it become easier to guess the secret using the second program.
While such a distribution may be seen as pathological it still shows the possible problems in making code analysis dependent on particular distributions.

 \subsection{\latset, maximum leakage and Channel Capacity}
The relation between \latset\ and channel capacity has been investigated in the literature 
\cite{PLAS08,ter2, KoSm}. The channel capacity of a program is defined as the maximum possible leakage for that program. Intuitively this is the context most advantageous for the attacker. \latset\ provides an elementary characterization of channel capacity: in fact  as the leakage is defined by $H(\Loi(P))$ using the well known information theoretical fact that the maximal entropy over a system with $n$ probabilities is $\log(n)$ we deduce that the channel capacity is $\log(|\Loi(P)|)$.
 
We note by ${\tt CC}(P)$ for the channel capacity of the program $P$. We have then
 
  \begin{proposition}\label{LAB:CC}
  \[ \Loi(P) \sqsubseteq \Loi(P') \Rightarrow {\tt CC}(P)\leq {\tt CC}(P') \]
   \end{proposition}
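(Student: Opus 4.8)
The plan is to reduce the statement to the elementary fact that refining a partition never decreases its number of blocks, and then invoke monotonicity of the logarithm. Recall that the channel capacity was identified above as ${\tt CC}(P)=\log(|\Loi(P)|)$, where $|\cdot|$ counts the blocks of a partition. Hence it suffices to establish the purely combinatorial claim
\[ \Loi(P) \sqsubseteq \Loi(P') \ \Rightarrow\ |\Loi(P)| \leq |\Loi(P')|, \]
after which applying $\log$ (monotone increasing) to both sides closes the argument.

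For the combinatorial claim I would unfold the definition of the refinement order (\ref{ORDER}): $X \sqsubseteq Y$ means that every block of $Y$ is contained in a (necessarily unique, since the blocks of a partition are pairwise disjoint) block of $X$. This yields a well-defined map from the blocks of $Y$ to the blocks of $X$ sending $Y_j$ to the block $X_i \supseteq Y_j$; it is surjective because the blocks of $Y$ cover $\Sigma$ and therefore cover each block of $X$. A surjection between finite sets forces $|\Loi(P')| \geq |\Loi(P)|$. This is exactly the observation ``$A \por B$ implies $|A|\leq|B|$'' already noted in Section \ref{BASIC}.

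Alternatively, and more in the spirit of the paper, one can bypass the combinatorics by writing ${\tt CC}(P) = \max_\mu H_\mu(\Loi(P))$, since the channel capacity is the maximum leakage over input distributions and leakage is $H_\mu(\Loi(P))$. By Theorem \ref{EntLoi}, $\Loi(P) \sqsubseteq \Loi(P')$ gives $H_\mu(\Loi(P)) \leq H_\mu(\Loi(P'))$ for every $\mu$; taking the maximum over $\mu$ on both sides preserves the inequality, hence ${\tt CC}(P) \leq {\tt CC}(P')$.

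I do not expect a genuine obstacle: both routes are short. The only points deserving a word of care are the uniqueness of the containing block in the first argument (which is where disjointness of partition blocks is used), and, in the second argument, the observation that the maximum defining channel capacity is indeed attained — at the uniform distribution on the blocks of $\Loi(P)$ — so that speaking of a maximum rather than a supremum is legitimate and monotonicity transfers cleanly.
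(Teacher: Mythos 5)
Your first argument is precisely the paper's own proof: the paper observes that if $\Loi(P)\sqsubseteq\Loi(P')$ then every block of $\Loi(P)$ is refined by blocks of $\Loi(P')$, so the number of blocks cannot decrease, and since ${\tt CC}(P)=\log(|\Loi(P)|)$ the monotonicity of $\log$ finishes the job --- you merely make the block-counting step more explicit by exhibiting the surjection from the blocks of the finer partition onto those of the coarser one. Your alternative route via ${\tt CC}(P)=\max_{\mu}H_{\mu}(\Loi(P))$ together with Theorem \ref{EntLoi} is also correct (and is consistent with the identity ${\tt CC}(P)=\log(|\Loi(P)|)=\max_{\mu}H_{\mu}(\Loi(P))$ recorded at the end of the same section), but it is not the argument the paper gives.
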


If  $\Loi(P)\sqsubseteq \Loi(P') $ then all blocks of $\Loi(P)$ are refined by blocks of $\Loi(P')$ so the number of blocks of $\Loi(P)$ is $\leq $ than the number of blocks of $\Loi(P')$, but the channel capacity for programs is the log of the number of blocks interpretation, hence the result  is proved.

The opposite direction of the implication doesn't hold: for example the partitions
\[  \{ \{a,b,c\},\{d\}  \} \mbox{ and } \{ \{a,b\},\{c,d\} \}\]
are not order related but have the same channel capacity 1.

 \subsubsection{\latset\ and min-entropy Channel Capacity}
 The relation between channel capacity of a program $P$ and $\log(|\Loi(P)|)$ is not confined to Shannon entropy. In fact K\"opf and Smith have shown that even if we choose Smith's min-entropy quantitative analysis \cite{KoSm} we get the same value, i.e. maximum vulnerability of a program $P$   according to Smith measure $ME$ is $\log(|\Loi(P)|)$. We hence have the equalities
 \[  {\tt CC}(P) = \log(|\Loi(P)|) =\max_{\mu} H_{\mu}(\Loi(P))=\max_{\mu} ME_{\mu}(P)\]

\section{Low inputs, Multiple runs and l.u.b. in  \latset}\label{LOW}
A major source of confusion in security analysis derives from poorly defined attacker models.
In this section we discuss a few  common modelling issues and how they can be dealt with in \latset.
\subsection{Active and passive attackers}
The lattice of information allows for different attacker's models: the most common and possibly interesting is the one corresponding to an {\em active attacker}, i.e. an attacker who control the low inputs; a typical example would be a cash machine where an attacker is able to choose a pin number. 
An active attacker can be modelled as we did in the previous sections by assuming that the low variables are initialised in the code, the initialisation values corresponding to the attacker choice.

We could however also model a {\em passive attacker}, an eavesdropper with no power to choose the low inputs. In this case the lattice atoms are the pair of low and high inputs. Take for example the program
 
 \begin{verbatim}
if (h == l)  o= 1; else o=2;
\end{verbatim}

where $h,l$ are 2 bits variables. The partition associated to the programs is:
\[ 
\{ \{ (0,0) \}, \{  (0,1),(0,2),(0,3) \}  , \dots 
,\{ (3,3)  \}, \{(3,0),(3,1),(3,2) \}  \} 
\]
assuming uniform distribution on the low and high inputs we then compute leakage as
\[H(\Loi(P)|l)= 4 \frac{1}{8} \log(4)+4\frac{1}{8}\log(\frac{4}{3}) =0.60375 \]

In fact an active attacker is a particular case of this setting, where the distribution on the inputs is such that only one low input has probability non-zero.
In that case the atoms of the lattice are, up to isomorphism, only the high inputs and $H(\Loi(P)|l)=H(\Loi(P))$.

\subsection{Non termination} In this work we have mostly considered output observations as values. We can however relax this and include among the possible observations non termination. 
This doesn't change the theory: non-termination is just an additional equivalence class: the class of all input states over which the program doesn't terminate; of course the usual computational and complexity problems arise when we try to compute such a class.

\subsection{Multiple runs}
Another aspect of an attacker model that has a natural algebraic interpretation in $\latset$ is an attacker capability to run the system $n$ times: for example an attacker trying three pin numbers on a cash machine.
Running a program several times with different low inputs may reveal more and more information about the secret; For example consider the password checking program $P$

\begin{verbatim}
if (h == l)  o= 1; else o=2;
\end{verbatim}

If we run it once assigning the value 5 to the low variable we gain the information whether the secret is 5 or not; by running it twice, assigning  to the low variable the value 5 and the value 7 we will gain the information whether the secret is 5 or is 7 or something else.

Written in terms of partitions this is nothing else than the join operation in \latset
\[  \{\{ 5 \} ,\{\not= 5  \}\} \sqcup \{\{  7\}, \{ \not=7 \}\}   =\{\{ 5 \} ,\{  7 \} , \{  \not=5,7 \}\}        \]

Hence the knowledge available to an attacker who can choose the low inputs and run the program $m$ times is modelled by the partition
\[  \Loi(P_1) \sqcup \dots \sqcup  \Loi(P_m)\]
where $ \Loi(P_i)$ is the partition corresponding to the i-th run of the program.

\subsubsection{Does it leak the same information?} A related question is whether a program leaks always the same information for each run of the program; for example a program leaking the last bit of the secret always leaks the same information no matter how many times we run the program but a password check leaks different information when we run it choosing different low inputs. This question can also be addressed by using l.u.b.: if the program $P$ leaks different information over different runs this means we can find two runs $P_i,P_j$ such that
\[   \Loi(P_i) \sqcup  \Loi(P_j) >    \Loi(P_i),  \Loi(P_j)\]
The interpretation of multiple runs in terms of l.u.b.s has also somehow a reverse implication, i.e. it is possible, given programs  $P_1,P_2$ to build a program whose interpretation is their l.u.b. This result has a practical significance: when  $P_1,P_2$ are different runs of the same program the  l.u.b. is their self-composition \cite{rezk}.  Formally \cite{bert}: 

\begin{proposition}\label{LUB}
Given programs $P_1,P_2$ there exists a program $P_{1\join 2}$ such that \[{\Loi}(P_{1\join 2})={\Loi}(P_1)\join {\Loi}(P_2)\]
\end{proposition}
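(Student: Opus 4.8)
The plan is to construct the program $P_{1 \join 2}$ explicitly as a self-composition (product) of $P_1$ and $P_2$ on disjoint copies of the low variables, and then verify that the kernel of the denotational semantics of this composed program is exactly the join of the two kernels. First I would rename the low variables of $P_2$ to fresh names (say replace each low variable $l$ by a primed copy $l'$ not occurring in $P_1$), leaving the high variable $h$ shared, obtaining a program $P_2'$ semantically identical to $P_2$ but acting on a separate low store; likewise I would direct its output to a fresh variable $o'$. Then I would set $P_{1\join 2} \equiv P_1 ; P_2'$, the sequential composition, so that running it on a high input $\sigma$ produces the pair of outputs $(\sem{P_1}(\sigma), \sem{P_2}(\sigma))$ in the two output variables $o, o'$. (Since the low variables are initialised in the code, as assumed throughout this section, no interference arises between the two phases.)

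The key step is then the kernel computation. By definition $\sigma \simeq_{P_{1\join 2}} \sigma'$ iff $\sem{P_{1\join 2}}(\sigma) = \sem{P_{1\join 2}}(\sigma')$, which by the construction above holds iff $\sem{P_1}(\sigma) = \sem{P_1}(\sigma')$ and $\sem{P_2}(\sigma) = \sem{P_2}(\sigma')$, i.e. iff $\sigma \simeq_{P_1} \sigma'$ and $\sigma \simeq_{P_2} \sigma'$. But conjunction of the two equivalence relations is exactly their intersection as relations, and by the characterisation of the lattice operations recalled in Section~\ref{BASIC} (join $=$ intersection of relations), this intersection is precisely $\Loi(P_1) \join \Loi(P_2)$. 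Equivalently, at the level of partitions, the blocks of $\Loi(P_{1\join 2})$ are the nonempty intersections of a block of $\Loi(P_1)$ with a block of $\Loi(P_2)$, which is the description of the least upper bound given in Section~\ref{SECRV}. Hence $\Loi(P_{1\join 2}) = \Loi(P_1) \join \Loi(P_2)$, as required.

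The main obstacle is not mathematical depth but care with the semantic bookkeeping: one must be sure that the renaming of $P_2$'s low variables and output variable genuinely decouples the two runs, so that the first run does not overwrite data the second run needs and vice versa, and that $\sem{P_1;P_2'}$ really factors as the pairing of the two output functions. This is where the hypothesis that low variables are initialised inside the code (so that $\Loi(P)$ depends only on $h$) is essential — without it the composed program's semantics would depend on an external low store shared between the phases and the clean factorisation would fail. Once that hygiene is in place the kernel identity is immediate, so I would spend most of the write-up making the variable-disjointness explicit (for instance by working in an extended state space $\Sigma \times \mathit{Low}_1 \times \mathit{Low}_2$ and noting $\sem{P_1;P_2'} = \langle \sem{P_1}, \sem{P_2'}\rangle$) and only a line or two on the appeal to the lattice-operation characterisation.
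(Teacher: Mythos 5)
Your proposal is correct and is essentially the paper's own argument: sequentially compose the two programs after renaming so that the phases are decoupled, observe that the composed program outputs the pair $(\sem{P_1}(\sigma),\sem{P_2}(\sigma))$, and conclude that its kernel is the intersection of the two kernels, i.e.\ the join in \latset. The only minor difference is that the paper renames \emph{both} programs onto disjoint variable sets and copies the high input into fresh variables ($h'=h$, $h''=h$) before each phase, which disposes of the bookkeeping worry you correctly flag (namely $P_1$ possibly overwriting state, including $h$ itself, that $P_2$ needs) more robustly than renaming only $P_2$'s low and output variables.
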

Given programs $P_1, P_2$, we define $P_{1\join 2} = P'_1; P'_2$ where the primed programs $P'_1, P'_2$ are $P_1, P_2$ with variables renamed so to have disjoint variable sets. If the two programs are syntactically equivalent, then this results in self-composition \cite{rezk} For example, consider the two programs
%\begin{equation*}
\[ P_1\equiv {\tt if\ (h== 0)\  x= 0;\ else\ x=1;}, \ \ \ P_2\equiv {\tt  if\ (h== 1)\ x= 0; \ else\ x=1;}  \]
%\end{equation*}
with their partitions $\Loi(P_1) =  \{\{ 0\}, \{ h\not=0 \} \}$ and $\Loi(P_2) = \{ \{ 1\} ,\{ h\not=1 \} \}$. The program  $P_{1\join 2}$ is the concatenation of the previous programs with variable renaming
\begin{eqnarray*}
P_{1\join 2} &\equiv& {\tt h' = h; if\ (h'== 0)\  x'= 0;\ else\ x'=1;}\\
&\ & {\tt h'' = h; if\ (h'' == 1)\ x'' = 0; \ else\ x'' =1;}
\end{eqnarray*}
The corresponding lattice element is the join, i.e. intersection of blocks, of the individual programs $P_{1} ,P_{2}$
%\begin{equation*}
\[ \Loi(P_{1\join 2}) = \{\{ 0\}, \{1\}, \{ h \not=0,1 \} = \{\{ 0\}, \{ h\not=0 \} \} \join \{ \{ 1\}, \{ h\not=1 \} \} \]
%\end{equation*}

\section{Further applications of \latset}\label{LOOP}
We quickly review two applications of \latset\ beyond the foundational aspect:
\subsection{Loop analysis} Loop constructs are difficult to analyse. However they have a natural interpretation in the lattice of information. In informal terms the idea is that loops can be seen as l.u.b. of a chain  in the lattice of information, where the chain is the interpretation of the different iterations of the loop.

To understand the ideas let's consider the program
\begin{quote}
\begin{verbatim}
l=0; 
while(l < h) {
   if (h==2) l=3 else l++ 
} 
\end{verbatim}
\end{quote}
and let us now study the partitions it generates.
The loop terminating in 0 iterations will reveal that {\tt h=0} i.e. the partition $W_0=\{\{0\}\{1,2,3\}\}$, termination in 1 iteration will reveal {\tt h=1} if  the output is 1 and  {\tt h=2} if the output is 3 i.e. 
 $W_1=\{\{1\}\{2\}\{0,3\}\}$,  the loop will never terminate in 2 iterations i.e.  $W_2=\{\{0,1,2, 3\}\}$ and in 3 iterations will reveal that  {\tt h=3} given the output 3, i.e.  $W_3=\{\{3\}\{0,1,2\}\}$.
  Let's define $W_{\leq n}$ as $\join_{n\ge i\ge 0} W_{i}$; we have then the chain\footnote{the chain is trivial in this example}
  \[ W_{\leq 1} =W_{\leq 2} =W_{\leq 3}=\{\{0\}\{1\}\{2\}\{3\} \} \]
We also introduce an additional partition $C$ to cater for the collisions in the loop: the collision partition is $ C=\{ \{0\}\{1\}\{2,3\}  \}$ because for {\tt h=2} the loop terminates with output 3 in 1 iterations and for {\tt h=3} the loop terminates with output 3 in 3 iterations. 
We have then 
\[ \Loi(P)=\join_{n\ge 0} W_{\leq n} \meet C   = \{ \{0\}\{1\}\{2,3\}  \}  \]

This setting is formalized in \cite{bert}. 
Given a looping program $P$ define $W_{\leq n}$ as the equivalence relation corresponding to the output observations available for the loop terminating in $\leq n$ iterations and let the  {\em collision equivalence} of a loop be the reflexive and transitive closure of the relation  $\sigma\simeq_C\sigma'$ iff  $\sigma ,\sigma'$ generate the same output from different iterations.

The following is then true:
\begin{proposition}
\[ \Loi(P)=\join_{n\ge 0} W_{\leq n} \meet C\]
\end{proposition}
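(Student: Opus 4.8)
The plan is to show both inclusions between the partition $\Loi(P)$ computed from the program's denotational semantics and the partition $\join_{n\ge 0} W_{\leq n} \meet C$ built from the loop's iteration structure. The key observation is that two input states $\sigma,\sigma'$ satisfy $\sigma \simeq_{\Loi(P)} \sigma'$ iff $P$ produces the same observable output on both (including the possibility of non-termination on both). So I would first unfold what it means for $\sigma,\sigma'$ to lie in the same block of $\join_{n\ge 0} W_{\leq n} \meet C$: by the definition of meet (transitive closure of the union) and join (intersection) in $\latset$, this holds iff $\sigma,\sigma'$ are identified by $\join_{n\ge 0} W_{\leq n}$ \emph{and} identified by $C$, i.e. (a) for every $n$, $\sigma$ and $\sigma'$ are in the same block of $W_{\leq n}$, and (b) $\sigma \simeq_C \sigma'$ in the reflexive-transitive closure of the "same output from different iterations" relation.

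First I would handle the direction $\Loi(P) \sqsubseteq \join_{n\ge 0} W_{\leq n} \meet C$ — equivalently, that being identified by the right-hand side implies being identified by $P$'s output. Suppose $\sigma,\sigma'$ satisfy (a) and (b). If $P$ terminates on $\sigma$ in exactly $k$ iterations with output $v$, then by (a) applied with $n=k$, $\sigma$ and $\sigma'$ are in the same block of $W_{\leq k}$; unfolding the definition of $W_{\leq k}=\join_{i\le k}W_i$ this forces $\sigma'$ to terminate within $k$ iterations with the same output $v$ \emph{unless} the block-merging came from a collision, which is exactly what $C$ records; condition (b) then closes the gap, forcing $\sem{P}(\sigma)=\sem{P}(\sigma')$. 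The non-termination case: if $P$ never terminates on $\sigma$, then $\sigma$ sits in the "non-terminating" block of every $W_{\leq n}$, and (a) forces $\sigma'$ into it too, hence $P$ also diverges on $\sigma'$.

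For the reverse direction $\join_{n\ge 0} W_{\leq n} \meet C \sqsubseteq \Loi(P)$ I would show that $\sem{P}(\sigma)=\sem{P}(\sigma')$ implies both (a) and (b). If both computations terminate with the same output $v$ at iteration counts $k$ and $k'$ respectively, then for $n \ge \max(k,k')$ both lie in the same ($v$-labelled) block of $W_{\leq n}$, and for $n < \min(k,k')$ both lie in the shared non-terminating-so-far block; the intermediate range is precisely where different iteration counts with the same output occur, which is why $C$ enters as a meet rather than the statement being just $\Loi(P)=\join_n W_{\leq n}$. The pair $(k,v),(k',v)$ witnesses $\sigma \simeq_C \sigma'$ directly when $k\ne k'$, and trivially (reflexivity) when $k=k'$. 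The divergent case is immediate as above. The main obstacle I anticipate is being careful about the precise bookkeeping of \emph{which} block merges in $W_{\leq n}$ come from genuinely equal outputs at the same iteration versus equal outputs at different iterations — the whole point of intersecting with $C$ is to undo the spurious identifications that $\join_n W_{\leq n}$ makes when an output value recurs across iterations, so the argument must pin down that $\join_n W_{\leq n}$ over-merges by exactly the collision relation and no more. Making this rigorous essentially requires an induction on iteration count to characterise the blocks of each $W_{\leq n}$, which is the one place where real work is needed; everything else is unwinding the lattice operations against the operational meaning of $\sem{P}$.
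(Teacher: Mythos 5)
There is a genuine gap, and it stems from a single but consequential misreading of the lattice operations. In \latset\ the \emph{join} is the intersection of the relations and the \emph{meet} is the transitive closure of their union; so two states are identified by $\join_{n\ge 0} W_{\leq n} \meet C$ iff they are linked by a chain each step of which is either a $\join_{n}W_{\leq n}$-identification \emph{or} a $C$-identification --- not, as your conditions (a) and (b) assert, iff they are identified by both relations simultaneously. ``Identified by both'' characterises the join $\bigl(\join_{n}W_{\leq n}\bigr)\join C$, and on the paper's running example that join is the discrete partition $\{\{0\}\{1\}\{2\}\{3\}\}$ while $\Loi(P)=\{\{0\}\{1\}\{2,3\}\}$. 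The same inversion appears in your closing remark that $\join_{n}W_{\leq n}$ ``over-merges'' and that $C$ ``undoes spurious identifications'': it is the other way round. The chain $\join_{n}W_{\leq n}$ over-\emph{splits}, because it separates states producing the same output at different iteration counts (in the example $h=2$ exits after $1$ iteration and $h=3$ after $3$, both with output $3$, so $W_{\leq 1}$ already separates them), and the meet with $C$ is exactly the coarsening that re-merges those blocks. Consequently your condition (a) is false for such colliding pairs, so the reverse inclusion as you plan it (``same output implies (a) and (b)'') cannot go through --- you even notice the obstruction in your ``intermediate range'' remark but do not reconcile it with the stated plan.

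Once the meet is read correctly the proof is shorter than what you outline. For $\Loi(P)\por \join_{n}W_{\leq n}\meet C$ it suffices to check that $\Loi(P)$ is a common lower bound of $\join_{n}W_{\leq n}$ and $C$: states identified by every $W_{\leq n}$ agree on termination and on output (inspect $W_{k}$ for $k$ the iteration count of a terminating one), and the generating pairs of $C$ have equal outputs by definition, a property preserved under reflexive--transitive closure; being below both, $\Loi(P)$ is below their greatest lower bound. For the converse, take $\sigma,\sigma'$ with $\sem{P}(\sigma)=\sem{P}(\sigma')$: if both diverge or both terminate at the same iteration count they are already related by $\join_{n}W_{\leq n}$, and if they terminate at different iteration counts with the same output they are related by a generating pair of $C$; either way they are related by the union of the two relations and hence by the meet. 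No induction characterising the blocks of each $W_{\leq n}$ is needed. (Note that the paper itself does not spell out a proof --- it works the example and defers the formalisation to the cited lecture notes --- so the comparison here is against what the statement actually requires rather than against an in-paper argument.)
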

Hence leakage $H(\Loi(P))$ for looping programs can be computed in terms of the chain $(W_{\leq n})_{n\ge 0}$ and the collision equivalence $C$.
The equivalence of this technique with previous information theoretical analysis of loops \cite{JCS09} is proved in \cite{bert}.
\subsection{Analysis of C-code vulnerabilities}
Recent work  \cite{acsac}  based on the $\Loi$ interpretation of programs, has demonstrated the applicability of QIF to real world vulnerabilities. Previous attempts to implement a quantitative analysis had hit a major hurdle: in very simple terms since QIF is based on $\Loi(P)$ and $\Loi(P)$ is the set theoretical kernel of the denotational semantics of $P$  computing  $\Loi(P)$ is computationally unfeasible.  
The approach followed in  \cite{acsac} is to change the QIF question from computing $\Loi(P)$ to computing bounds on the channel capacity ${\tt CC}(P)$. We saw these concepts are related in theorem \ref{LAB:CC}. 
Using assume-guarantee reasoning questions about bounds can be expressed in verification terms\footnote{A similar idea has been independently proposed by Yasuoka and Terauchi in \cite{ter2}}. In particular by expressing them as drivers for  the symbolic model checker  CBMC \cite{ckl2004} several CVE reported vulnerabilities in the Linux kernel were quantitatively analysed in \cite{acsac}; moreover the official patches for such vulnerabilities were formally verified as fixing the leak.
That work is the first demonstration of quantitative information flow addressing security concerns of real-world industrial programs.

\section{Conclusions}
We investigated the importance of the Lattice of Information for Quantitative Information Flow. This lattice allows for an algebraic treatment of confidentiality and clarifies the relationship between the Information Theoretical, probabilistic and guessability measures that are used in QIF.
Our results show that  these measures  are all consistent w.r.t. the classification of language based confidentiality threats,
and this classification is captured by the refinement order in \latset.

We have seen how these results fit and contribute to recent work in the community, especially the ones by Yasuoka and Terauchi \cite{ter1} and by Smith \cite{Fossacs09}. 
It is a matter for future research to determine whether the Lattice of Information can also provide a unifying foundation for non-deterministic and probabilistic systems.

\subsection{Acknowledgements}
I am very grateful to Jonathan Heusser with whom work reported in section \ref{BASIC} and section \ref{LOOP} was carried on.

\end{document}